\newtheorem{Theorem}{Theorem}
\newtheorem{Property}{Property}
\newtheorem{Corollary}{Corollary}
\newcommand{\ds}{\displaystyle}
\newcommand{\jump}[1]{\left\llbracket #1 \right\rrbracket}
\newcommand{\mean}[1]{\left\langle #1 \right\rangle}
\definecolor{rouge}{rgb}{1,0,0}
\definecolor{bleu}{rgb}{0,0,1}
\definecolor{colorcb}{rgb}{0.6350, 0.0780, 0.1840}
\definecolor{colorcbq}{rgb}{0, 0.5, 0}
\begin{document}
	
	\begin{frontmatter}
		
		\title{Surface impedance and topologically protected interface modes \\in one-dimensional phononic crystals}
		
		\author[LMA]{Antonin Coutant}
		\ead{coutant@lma.cnrs-mrs.fr}
		\author[LMA]{Bruno Lombard\corref{cor1}}
		\ead{lombard@lma.cnrs-mrs.fr}
		\cortext[cor1]{Corresponding author. Tel.: +33 491 84 52 42 53.}
		\address[LMA]{Aix Marseille Univ, CNRS, Centrale Marseille, LMA UMR 7031, Marseille, France}
		
		\begin{abstract}
			When semi-infinite phononic crystals (PCs) are in contact, localized modes may exist at their boundary. The central question is generally to predict their existence and to determine their stability. With the rapid expansion of the field of topological insulators, powerful tools have been developed to address these questions. In particular, when applied to one-dimensional systems with mirror symmetry, the bulk-boundary correspondence claims that the existence of interface modes is given by a topological invariant computed from the bulk properties of the phononic crystal, which ensures strong stability properties. This one-dimensional bulk-boundary correspondence has been proven in various works. Recent attempts have exploited the notion of surface impedance, relying on analytical calculations of the transfer matrix. In the present work, the monotonic evolution of surface impedance with frequency is proven for all one-dimensional phononic crystals with mirror symmetry. This result allows us to establish a stronger version of the bulk-boundary correspondence that guarantees not only the existence but also the uniqueness of a topologically protected interface state. The method is numerically illustrated in the physically relevant case of PCs with imperfect interfaces, where analytical calculations would be out of reach. 
		\end{abstract}
		
		\begin{keyword}
			periodic media, linear waves, imperfect interfaces, Bloch-Floquet theory, inversion symmetry, Zak phase.
		\end{keyword}
		
	\end{frontmatter}
	
	
	
	\tableofcontents
	
	\section{Introduction}\label{SecIntro}
	
	Periodic media have forbidden frequency bands, in which waves cannot propagate. This property is very much used in various fields of wave physics, from quantum mechanics~\cite{Ashcroft} to electromagnetism~\cite{Joannopoulos} or acoustics~\cite{Laude}. It has been the subject of a great deal of theoretical work, particularly in spectral theory~\cite{ReedIV}. When two phononic/photonic crystals (PCs) are joined, interface modes can exist in the gaps common to both crystals, depending on the coupling between the media~\cite{Allaire02}. These modes then remain localized (in one dimension (1D)) or can propagate at the interface (in 2D or more). However, their stability is not guaranteed. In the presence of manufacturing defects or impurities, these waves are then diffracted in all directions, losing the benefit of guidance.
	
	Recently, the notion of {\it topological insulators}, initially exhumed in the context of the quantum Hall effect, has provided a powerful approach to obtain localized modes with high robustness against defects~\cite{Ozawa19,Ma19}. If certain symmetries are satisfied, appropriate topological invariants can be obtained from the bulk properties of the material~\cite{Asboth16,ProdanShultzBaldes}. The \emph{bulk-edge correspondence} then allows one to relate these invariants with the presence of edge or interface localized modes~\cite{Delplace11,Delplace21}. Despite the broad applicability of the principle, a precise statement of the bulk-boundary correspondence depends on the symmetry class, the dimensionality and the type of topological invariant. For this reason, providing a proof of this correspondence requires case-by-case analyses, with the most extensively studied case being 2D systems in the unitary class characterized by a Chern number~\cite{Hatsugai93,Essin11,ProdanShultzBaldes}. 
	
	In 1D continuous systems with mirror symmetry, the most popular invariant is the Zak phase~\cite{Zak89}, which can only take values $0$ or $\pi$ and remains constant until a Dirac point is reached. In this context, the Zak phases of contacting crystals govern the existence of localized modes. These modes are topologically protected in the sense that they are maintained by symmetry preserving continuous deformations of the medium, unless the gap is closed (Dirac point). In this context, two classes of methods have been used to analyze the existence of topological states and prove the bulk-edge correspondence. A first method consists of perturbing the Hamiltonian of the system studied, in the vicinity of a Dirac point \cite{Fefferman12,Lin22}. It therefore requires an \emph{a priori} knowledge of Dirac points or their artificial construction by band-folding. A second approach consists in focusing on specific models, where the correspondence can be established by direct calculations~\cite{Xiao14}. Moreover, several works have exploited the concept of impedance~\cite{Xiao14,Lin22}, which allows one to relate the existence of interface modes to bulk properties. This approach is elegant and offers a new point of view. However, it has been so far limited to specific examples or perturbative approaches.  The objective of the present paper is to generalize the impedance approach to any PCs. This leads us to a proof the a stronger version of the bulk-edge correspondence for 1D continuous systems with mirror symmetry. We also point out that several authors have analyzed the somewhat related but different problem of localized modes for 1D PCs with parameter dependent interfaces, and their relations to Chern numbers~\cite{Gontier20,Drouot21}. 
	
	For this purpose, the sketch of the study is as follows. Section \ref{SecPS} describes elastic PCs, where classical band structure results are recalled. Section \ref{SecCS} focuses on the particular case of mirror symmetric PCs. The symmetry properties of Bloch modes at band edges are proven, based on some existing results. Section \ref{SecZ} investigates two semi-infinite PCs in contact. The concept of surface impedance is introduced. When the frequency varies in a gap, a strict decay property of the surface impedance is proven (Property \ref{PropdZdOm}). This key result allows to prove Theorem \ref{TheoZ}, which recovers the existing results on topologically protected interface modes. 
	The absence of such modes with Neumann or Dirichlet conditions is also explained, contrary to the discrete case such as SSH. Generalizations are also proposed, notably allowing to treat the case of subwavelength PCs with Helmholtz resonators \cite{Zhao21}. Section \ref{SecNum} illustrates numerically the theoretical findings. Lastly, Section \ref{SecConclu} concludes the paper and proposes some future directions of investigations.

	
	\section{Problem statement}\label{SecPS}
	
	This section starts with a presentation of the physical problem addressed. After that and for completeness, we collect several known results about the spectrum of a periodic differential operator. The readers are referred to \cite{ReedIV,Kuchment93,Brown12} for more details about the Bloch-Floquet theory.
	
	\subsection{Physical modeling}\label{SecPS-phys}
	
	We consider linear elastic wave propagation at a given angular frequency $\omega=2\pi f$. The medium is $h$-periodic with mass density $\rho(x)$ and Young's modulus $E(x)$. The displacement $u(x)$ satisfies the Helmholtz equation
	\begin{equation}
		\frac{d}{dx}\left(E(x)\,\frac{du}{dx}\right)+\rho(x)\,\omega^2 \,u=0.
		\label{Helmholtz}
	\end{equation}
	The equation \eqref{Helmholtz} is generic for different wave physics. For instance, the case of acoustics is obtained changing $u$ by the acoustic pressure $p$, $E$ by $1/\rho$, and $\rho$ by $1/\kappa$, where $\kappa\equiv \rho c^2$ is the modulus of compressibility. Similarly, the case of photonics is obtained by changing $\rho$ by the permittivity $\epsilon$ and $E$ by $1/\mu$, where $\mu$ is the permeability. Without loss of generality, the edges of the periodic cells are assumed to be located at $nh$, with $n\in \mathbb{Z}$. The physical parameters are piecewise smooth, $L^\infty$ and strictly positive. 
	
	
	\subsection{Band structure}\label{SecPS-spectrum}
	
	We denote $L^2(\mathbb{R})$ the Hilbert space equipped with the inner product
	\begin{equation}
		\mean{f|g}=\int_0^h\rho(x)\,f(x)\,\overline{g}(x)\,dx , 
		\label{ProdScal}
	\end{equation}
	where $\overline{g}$ refers to the complex conjugate of $g$. Let ${\cal B}=[-\pi,\pi]$ be the first Brillouin zone. The Bloch Hamiltonian ${\cal L}$ is defined as the linear operator 
	\begin{equation}
		{\cal L}u=-\frac{1}{\rho(x)}\,\frac{d}{dx}\left(E(x)\,\frac{du}{dx}\right) \mbox{  for } x\in\mathbb{R}. 
	\end{equation} 
	Given $q\in{\cal B}$, Bloch-Floquet conditions yield a $q$-dependent eigenvalue problem on  $\mathbb{R}$: find $(\varepsilon,u)$ such that 
	\begin{equation} \label{Floquet_Eig}
		\left\{	
		\begin{aligned}
			{\cal L}(q) u &= \varepsilon u , \\
			u(x+h) &= e^{\mathrm{i}q}\,u(x) ,
		\end{aligned}
		\right. 
	\end{equation} 
	in the function space
	$$
	L_q^2=\{g \in L^2_{\rm loc}:g(x+h)=e^{\mathrm{i}q}\,g(x)\}.
	$$
	The eigenvalue $\varepsilon$ is the square angular frequency that appears in the Helmholtz equation~\eqref{Helmholtz}: $\varepsilon = \omega^2$. The eigenvalue problem~\eqref{Floquet_Eig} has a an infinite set of (possibly repeated) real positive eigenvalues $\varepsilon_n(q):=\omega_n^2(q)$ ordered by increasing value: 
	$$
	\varepsilon_1(q)\leq \varepsilon_2(q)\leq \cdots \leq \varepsilon_n(q)\leq \cdots
	$$
	The dispersion relations of the $n$th band $q \to \varepsilon_n(q)$ are continuous with respect to $q\in{\cal B}$ and monotonous on each half of the Brillouin zone, i.e. $[-\pi, 0]$ and $[0,\pi]$~\cite{ReedIV,Shahraki22}. Moreover, they satisfy $\varepsilon_n(-q)=\varepsilon_n(q)$, as imposed by reciprocity. The eigenfunctions $u_n$ are called Bloch modes, and they are orthogonal for the inner product \eqref{ProdScal}. Lastly, $\varepsilon_1(0)=0$, and the corresponding Bloch mode is a constant function. For each integer $n$, let
	$$
	\varepsilon_n^-=\min\{\varepsilon_n(q):q\in{\cal B}\},\qquad \varepsilon_n^+=\max\{\varepsilon_n(q):q\in{\cal B}\}.
	$$
	Then the entire spectrum of the Hamiltonian is given by
	$$
	\sigma({\cal L})=\bigcup_{n\geq 1}[\varepsilon_n^-,\varepsilon_n^+],
	$$
	which corresponds to the essential part of the spectrum. The interval $[\varepsilon_n^+,\varepsilon_{n+1}^-]$ is a gap if $\varepsilon_n^+<\varepsilon_{n+1}^-$ for some $n$. 
	
	
	\section{Phononic crystals with mirror symmetry}\label{SecCS}
	
	\subsection{Parity of Bloch modes}\label{SecCS-parity}
	
	From now on, we consider a particular case of periodic media, which are mirror symmetric, or equivalently reflection symmetric with respect to the centre of each cell. On $[0,h]$, we thus have $\rho(x)=\rho(h-x)$ and $E(x) = E(h-x)$. One introduces the parity operator ${\cal P }$, such that $({\cal P}f)(x)=f(h-x)$. Mirror symmetry implies that 
	\begin{equation}
		{\cal P}\,{\cal L}(-q)={\cal L}(+q)\,{\cal P}, \qquad \forall q\in{\cal B}.
		\label{CommutPL}
	\end{equation}
	Since ${\cal L}(-\pi)={\cal L}(\pi)$, it follows from \eqref{CommutPL} that the Hamiltonian and the parity operator commute at the band edges $q=0$ and $q=\pm \pi$.
	
	Let us consider an eigenvector $u_n(q)$ of ${\cal L}(q)$, with eigenvalue $\varepsilon_n(q)$. Then \eqref{CommutPL} gives that ${\cal P} u_n(q)$ is an eigenvector of ${\cal L}(-q)$ with the same eigenvalue $\varepsilon_n(q)$. Moreover, $u_n(-q)$ is also an eigenvector of ${\cal L}(-q)$ with eigenvalue $\varepsilon_n(q)$. Assuming that $\varepsilon_n(q)$ is a non-degenerate eigenvalue (non-overlapping bands), it means that $u_n(-q)$ and ${\cal P} u_n(q)$ are proportional to one another. Since ${\cal P}$ is unitary, it follows
	\begin{equation}
		u_n(-q)=\mu \,{\cal P} u_n(q)=e^{\mathrm{i}\,\xi_n(q)}{\cal P}u_n(q),
		\label{ThetaQ}
	\end{equation}
	where $\xi_n(q)$ is a locally smooth function of $q$. At the band edges ($q=0,\pi$), applying ${\cal P}$ on both sides of \eqref{ThetaQ} and using the symmetry property ${\cal P}^2=1$ gives $\mu^2=1$. As a consequence $\mu=\{\pm 1\}$ and $\xi_n=\{0,\pi\}$ at the band edges. The case $\mu=+1$ (and $\xi_n=0$) gives $u_n(q)={\cal P} u_n(q)$: the Bloch mode is symmetric, and $u'_n$ is anti-symmetric, where prime denotes the space derivative. Conversely, in the case $\mu=-1$ (and $\xi_n=\pi$), then $u_n(q)=-{\cal P} u_n(q)$: the Bloch mode is anti-symmetric, and $u'_n$ is symmetric. We introduce the spaces of symmetric functions ${\cal S}$ and anti-symmetric functions ${\cal A}$
	\begin{equation}
		{\cal S}=\left\{g : {\cal P}g=+g\right\},\qquad {\cal A}=\left\{g : {\cal P}g=-g\right\}.
		\label{SA}
	\end{equation}
	Then the properties of Bloch modes at the edges of the gaps (or simply "band edges") are summed up as follows.
	
	\begin{Property}
		Let us consider a mirror symmetric PC. We assume that the $n$th eigenvalues of the Hamiltonian are not degenerate at $q=0$ and $\pi$. Then the following alternative holds at the band edges: 
		\begin{itemize}
			\item either $u_n \in {\cal S}$ and $u'_n\in {\cal A}$;
			\item or $u_n \in {\cal A}$ and $u'_n\in {\cal S}$.
		\end{itemize} 
		\label{PropCS}
	\end{Property} 
	
	\noindent
	The values of $u_n(0)$ and $u'_n(0)$ are deduced from the Property \ref{PropCS}. First of all, $u_n(0)$ and $u'_n(0)$ cannot be simultaneously zero (and similarly at $h$). By uniqueness of the solution of \eqref{Helmholtz}, this would imply that $u_n$ is identically null. 
	
	\begin{table}[h!]
		\begin{center}
			\begin{small}
				{\renewcommand{\arraystretch}{1.5}
					\begin{tabular}{c||c|c}
						& $q=0$ & $q=\pi$ \\ [4pt]
						\hline 
						$u_n\in{\cal S}$ & $u'_n(0)=0$ & $u_n(0)=0$ \\ [4pt]
						\hline
						$u_n\in{\cal A}$ & $u_n(0)=0$ & $u'_n(0)=0$ \\ [4pt]
						\hline
					\end{tabular}
				}
			\end{small} 
		\end{center}
		\vspace{-0.5cm}
		\caption{\label{TabU}Values of the Bloch modes $u_n(0)$ and $u'_n(0)$ at the edges of the $n$th gap. The cases are distinguished according to the symmetry of $u_n$ and the value of $q$.}
	\end{table}
	Then, let us first consider $q=0$ and assume that $u_n\in {\cal S}$. The Property \ref{PropCS} implies $u'_n\in{\cal A}$, and hence $u'_n(h)=-u'_n(0)$. But the Bloch-Floquet condition gives $u'_n(h)=e^{\mathrm{i}0}u'_n(0)=+u'_n(0)$, and thus $u'_n(0)=-u'_n(0)\equiv 0$. If $u_n\in{\cal A}$, then similar arguments yield $u_n(0)=-u_n(0) = 0$. At $q=\pi$, the Bloch-Floquet conditions gives $u_n(h)=e^{\mathrm{i}\pi}u_n(0)=-u_n(0)$ and $u'_n(h)=-u'_n(0)$. If $u_n\in{\cal S}$ then $u_n(0)=0$. Conversely, if $u_n\in{\cal A}$ then $u'_n\in{\cal S}$ and $u'(0)=0$. All these cases are summarized in the Table \ref{TabU}. 
	
	Figure \ref{FigBloch} illustrates these different cases. The physical and geometrical parameters correspond to the configuration studied in Section \ref{SecNum} (with the interface spacing $\theta=0.25$), and will be described later. For the moment, it is sufficient to observe the symmetry properties. Figure \ref{FigBloch} represent the Bloch mode at $q=\pi$ and at frequencies of the band edges of the first gap. At the entry frequency $\varepsilon_1^+$ of gap 1 (a), $u_1$ is symmetric. As written in Table \ref{TabU}, $u_1$ is therefore symmetric and cancels at $x=0$ and $x=h$. On the contrary, $u_2$ is antisymmetric at the exit frequency $\varepsilon_2^-$ of the gap 1 (b), leading to $u_2'(0)=u_2'(h)=0$.
	
	\begin{figure}[h!]
		\begin{center}
			\begin{tabular}{cc}
				(a) & (b)\\
				\hspace{-1cm}
				\includegraphics[width=0.55\textwidth]{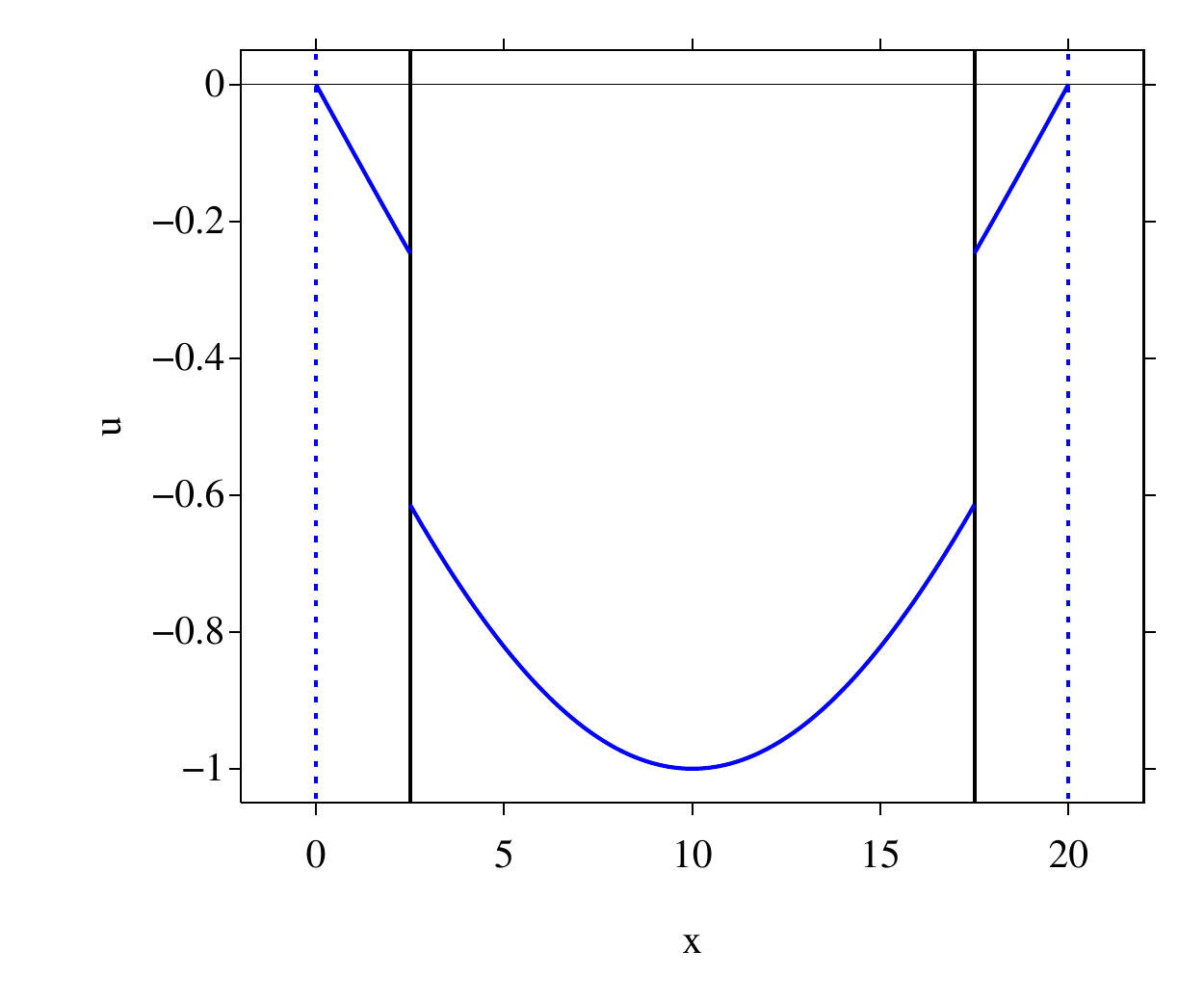} &
				\hspace{-1cm}
				\includegraphics[width=0.55\textwidth]{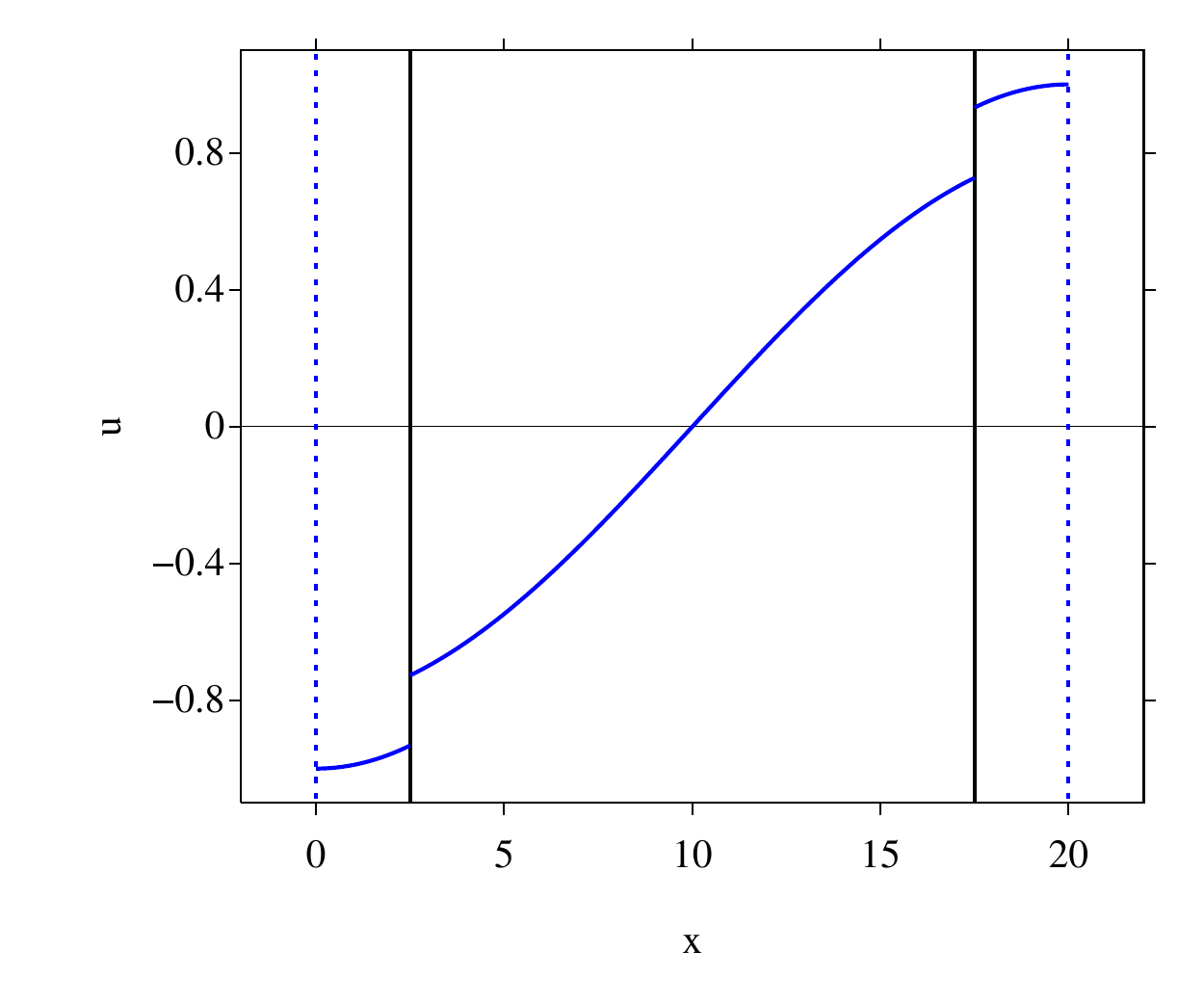} 
			\end{tabular}
		\end{center}
		\vspace{-0.6cm}
		\caption{\label{FigBloch}Bloch mode at the edges of gap 1 ($q=\pi$). (a): entry of the gap $\varepsilon_1^+$; (b): exit of the gap $\varepsilon_{2}^-$. The vertical solid lines denote the imperfect interfaces. The vertical dotted lines denote the edges of the elementary cell. The physical parameters are described in Section \ref{SecNum}.} 
	\end{figure}
	
	
	\subsection{Change of parity across a gap}\label{SecCS-change}
	
	Here we investigate the change of parity of the Bloch modes at $q=0$ or $q=\pi$ when the frequency crosses a gap. The proof relies on the oscillation theory of Sturm-Liouville operators, see for instance \cite{Weidmann06,Brown12,Lin22}. Let us denote $\varepsilon_j^P$, $\varepsilon_j^A$, $\varepsilon_j^D$, and $\varepsilon_j^N$ the $j$th eigenvalues of the Hamiltonian ${\cal H}$ on the elementary cell $[0,h]$, respectively associated with the following boundary conditions:
	\begin{enumerate}
		\item periodic boundary conditions: $u(h)=u(0)$, $u'(h)=u'(0)$, which is the Bloch condition at $q=0$;
		\item antiperiodic boundary conditions: $u(h)=-u(0)$, $u'(h)=-u'(0)$, which is the Bloch condition at $q=\pi$;
		\item Dirichlet boundary conditions: $u(h)=u(0)=0$;
		\item Neumann boundary conditions: $u'(h)=u'(0)=0$.
	\end{enumerate}
	As stated in Theorem 13-10 of \cite{Weidmann06}, these eigenvalues satisfy the interlacing property:
	\begin{equation}
		\begin{array}{l}
			\ds \varepsilon_1^N\leq \varepsilon_1^P < \varepsilon_1^A \leq \{\varepsilon_2^N,\varepsilon_1^D\} \leq \varepsilon_2^A < \varepsilon_2^P \leq \{\varepsilon_3^N,\varepsilon_2^D\} \leq \cdots \\ [6pt]
			\ds \cdots \leq \varepsilon_{2n-1}^P < \varepsilon_{2n-1}^A \leq \{\varepsilon_{2n}^N,\varepsilon_{2n-1}^D\} \leq \varepsilon_{2n}^A < \varepsilon_{2n}^P \leq \{\varepsilon_{2n+1}^N,\varepsilon_{2n}^D\} < \varepsilon_{2n+1}^P \leq \cdots
		\end{array}
		\label{Interlacing}
	\end{equation}
	In substance, this says that there is a single Neumann and a single Dirichlet eigenvalue inside each gap (possibly closed) of the corresponding periodic problem. This leads to the following result (see Theorem 4.4 of \cite{Lin22}): 
	
	\begin{Property}
		Let us consider a mirror symmetric PC, where the $j$th band is isolated. Then the Bloch modes on each edge of a gap, i.e. $(q,\varepsilon_j^+)$ and $(q,\varepsilon_{j+1}^-)$ with $q=0$ or $q=\pi$, attain different symmetries.
		\label{PropParityChange}
	\end{Property}
	
	\noindent
	In other words, if $u_n\in{\cal S}$ then $u_{n+1}\in{\cal A}$, and inversely. To show this, we first notice that the boundary values in Table \ref{TabU} imply that the Bloch mode on each edge of the $n$th gap satisfy either Neumann or Dirichlet boundary conditions. Using the interlacing property \eqref{Interlacing}, we conclude that there is a single Neumann and a single Dirichlet eigenvalue on the gap edges. It follows Property~\ref{PropParityChange}. 
	
	
	\subsection{Relation between Zak phase and Bloch mode symmetry}\label{SecCS-Zak}
	
	Bands of a mirror symmetric 1D PC possess topological properties characterized by an invariant called the Zak phase. To define it, we introduce the Berry connection $A_n(q)$ of the $n$th band: 
	\begin{equation}
		A_n(q)=-\mathrm{i} \mean{u_n(q)|\partial_q u_n(q)},
		\label{Berry}
	\end{equation}
	where $\mean{.|.}$ is the Hermitian inner product~\eqref{ProdScal}. The integral of the Berry connection across the first Brillouin zone gives the Zak phase:
	\begin{equation}
		\Phi_n=\int_{-\pi}^{+\pi}A_n(q)\,dq.
		\label{Zak}
	\end{equation}
	Notice that the Zak phase is defined using the Bloch mode $u_n(q)$, which is not periodic in $x$, but periodic in $q$ (see equation~\eqref{Floquet_Eig}). In the case of a periodic and mirror symmetric medium, and assuming that the $n$th eigenvalue is non-degenerate (no Dirac point), the Zak phase is directly given by the change of symmetry from one edge of the band to the other, namely: 
	\begin{equation}
		\Phi_n=\xi_n(\pi)-\xi_n(0),
		\label{PhiZak}
	\end{equation} 
	where $\xi$ is defined in \eqref{ThetaQ}. A proof of \eqref{PhiZak} is given in \ref{Sec-ZakTopo}. The Zak phase is defined mod $2\pi$, and it can be either 0 (trivial) or $\pi$ (topological).
	
	As we shall see in Section \ref{SecZ}, the existence of an interface localized mode depends on the symmetries on each edge of a given gap, i.e. at $\varepsilon_n^+$ and $\varepsilon_{n+1}^-$. As we already saw (see Property~2), the two symmetries are necessarily different, so it is sufficient to know the mode symmetry at $\varepsilon_n^+$. Hence, we define the bulk topological index of the gap $n$ as ${\cal J}_n$:
	\begin{equation}
		{\cal J}_n=\left\{
		\begin{array}{ll}
			+1 & \mbox{ if } u_n^+\in {\cal S},\\ [6pt]
			-1 & \mbox{ if } u_n^+\in {\cal A}.
		\end{array}
		\right.
		\label{IndiceJ}
	\end{equation}
	Now, since the Bloch mode at zero frequency ($\varepsilon_1(0) = 0$) is always symmetric ($u_1^- \in {\cal S}$), we can obtain the symmetry at the entry of a given gap by checking how many symmetry changes occur when the frequency ranges from 0 to $\varepsilon_n^+$: for each gap there is a change, and for each band there is a change if it is topological and none if it is trivial. As shown in \cite{Xiao14,Lin22}, this translates into 
	\begin{equation}
		{\cal J}_n=\ds (-1)^{n-1}\prod_{j=1}^n e^{\mathrm{i}\,\Phi_j}.
	\end{equation}
	
	
	\section{Semi-infinite phononic crystals}\label{SecZ}
	
	\subsection{Surface impedance}\label{SecZ-def}
	
	From now on, we consider two semi-infinite PCs in contact at $x=0$. The interface at $x=0$ is at the boundary of the unit cells for both the left and right PC, and $u$ and $E\,u'$ are continuous across the interface. The features of the PCs on the left ($x<0$) and right ($x>0$) are denoted by the indices $L$ and $R$, respectively. For instance, the Hamiltonians are ${\cal L}_L$ and ${\cal L}_R$, with eigenvalues $\varepsilon_{n,L}$ and $\varepsilon_{n,R}$. In each PC, the parameters are spatially periodic and with mirror symmetry. For simplicity, we assume that the periods are identical and equal to $h$, without this being restrictive.  
	
	Let us assume that the intersection of the $n$th gaps of the two Hamiltonians is non empty: 
	\begin{equation}
		\Omega_n=[\varepsilon_{n,L}^+,\varepsilon_{n+1,L}^-]\cap[\varepsilon_{n,R}^+,\varepsilon_{n+1,R}^-]:=[\varepsilon_n^+,\varepsilon_{n+1}^-]\neq \emptyset.
		\label{OmegaN}
	\end{equation} 
	For simplicity, we assumed the left and right gaps to be labeled by the same number $n$, but our results stand for any two gaps $n_L$ and $n_R$ as long as the intersection $\Omega$ as in equation~\eqref{OmegaN} is non-empty. For further use, $u_{n,L}^+$ denotes the Bloch mode for the eigenvalue $\varepsilon_{n,L}^+$; similarly $u_{n,R}^+$ denotes the Bloch mode for the eigenvalue $\varepsilon_{n,R}^+$.
	
	The wave fields with frequency in this gap, solution of equation~\eqref{Helmholtz}, can be decomposed into two modes, one increasing exponentially and the other decreasing exponentially as $|x|\rightarrow+\infty$. For frequencies inside $\Omega_n$, one defines {\it surface impedances} of PC-L and PC-R \cite{Xiao14}:
	\begin{equation}
		Z_L(\omega)=-\frac{u(0^-,\omega)}{E\,u'(0^-,\omega)} , \qquad 
		Z_R(\omega)=+\frac{u(0^+,\omega)}{E\,u'(0^+,\omega)} , 
		\label{ImpedanceZ}
	\end{equation}
	where $u$ is the unique solution decreasing when $x \to -\infty$ for $Z_L$ and when $x\to +\infty$ for $Z_R$. Inside a gap, $u$ and $u'$ are real, and hence $Z_{L,R}$ are real. Hereafter, we will simply write $Z$ to characterise the generic properties of the surface impedance, independently of the medium considered. We will write the frequency indifferently $\varepsilon$ or $\omega$, with $\varepsilon=\omega^2$. Lastly, the notation $Z_n$ will denote the surface impedance at $\varepsilon_n$.  
	
	The notion of surface impedance will be fundamental in the rest of the paper. An interface mode at $x=0$ corresponds to two evanescent modes satisfying the continuity conditions $u(0^-,\omega)=u(0^+,\omega)$ and $E\,u'(0^-,\omega)=E\,u'(0^+,\omega)$. This leads to the following property.
	
	\begin{Property}
		A necessary and sufficient condition for the existence of an interface localized mode in $\Omega_n$ is
		\begin{equation}
			Z_L(\omega)+Z_R(\omega)=0, \qquad \omega\in\Omega_n.
			\label{ZLZR}
		\end{equation}
		\label{PropZLZR}
	\end{Property}
	
	\noindent
	To determine the existence of a solution to \eqref{ZLZR}, we study the properties of $Z$ in the gaps. The first key element is that the impedance value at the edges of a gap is given by the symmetries of Bloch modes. Indeed, at the edges of a gap, $u:=u_n$ and $u':= u'_n$. Based on the definition of $Z$ and on Table \ref{TabU}, the surface impedance $Z_n$ on the edges of a gap is either $0$ or $\infty$, depending on the symmetry of $u_n$. This is summarized in Table~\ref{TabZ}.  
	
	\begin{table}[h!]
		\begin{center}
			\begin{small}
				\begin{tabular}{c||c|c}
					& $q=0$ & $q=\pi$ \\ [6pt]
					\hline
					$u_n\in{\cal S}$ & $Z_n=\pm\infty$ & $Z_n=0$\\ [6pt]
					$u_n\in{\cal A}$ & $Z_n=0$ & $Z_n=\pm\infty$\\
					\hline
				\end{tabular}
			\end{small} 
		\end{center}
		\vspace{-0.5cm}
		\caption{\label{TabZ}Surface impedance $Z_n$ on the edges of the Brillouin zone and at the edges of the $n$th gap, depending on the symmetry of the Bloch mode $u_n$.}
	\end{table} 
	
	At this level, it is tempting to use a continuity argument to find solutions of equation~\eqref{ZLZR}, i.e. interface modes. Table \ref{TabZ} is however not enough to conclude, since we don't control the sign of the impedance. To do so, we will now investigate the evolution with the frequency of $Z_L$ and $Z_R$ inside a gap.

	
	\subsection{Monotony of the surface impedances in gaps}\label{SecZ-BG}
	
	We now discuss a key new result, which will allow us to make strong statement regarding the bulk-boundary correspondence: the in-gap impedance obtained from equation~\eqref{Helmholtz} is always a decreasing function of the frequency. For this, purpose, one differentiates the Helmholtz equation \eqref{Helmholtz} with respect to $\omega$. Setting $\varphi=\frac{du}{d\omega}$, one obtains
	\begin{equation}
		\frac{d}{dx}\left(E(x)\frac{d\varphi}{dx}\right)+\rho(x)\,\omega^2\,\varphi=-2\,\rho(x)\,\omega\,u.
		\label{HelmholtzDer}
	\end{equation}
	From the definition of $Z_R$ in \eqref{ImpedanceZ}, it follows
	\begin{equation}
		\frac{dZ_R}{d\omega}=\frac{W(0^+,\omega)}{\left(E\,u'(0^+)\right)^2},
		\label{dZdOm}
	\end{equation}
	with the Wronskian
	\begin{equation}
		{\cal W}(x,\omega)=E(x) \left(\varphi(x) \,u'(x ) -u(x)\,\varphi'(x)\right).
		\label{Wronskian}
	\end{equation}
	The prime means the derivative with respect to $x$. The Wronskian is then differentiated with respect to $x$. Using \eqref{Helmholtz} and \eqref{HelmholtzDer} gives
	\begin{equation}
		\begin{array}{lll}
			\ds {\cal W}'& = & \ds \varphi\,\left(E\,u'\right)'-u\,\left(E\,\varphi'\right)',\\ [10pt]
			&=& \ds \varphi\,\left(-\rho\,\omega^2\,u\right)-u\,\left(-2\,\rho\,\omega\,u-\rho\,\omega^2\,\varphi\right),\\ [10pt]
			&=& \ds 2\,\rho\,\omega\,u^2.
		\end{array}
		\label{dZdX}
	\end{equation}
	The latter is integrated with respect to $x$ on PC-R:
	\begin{equation}
		{\cal W}(+\infty,\omega)-{\cal W}(0,\omega)=2\,\omega\int_0^{+\infty}\rho\,u^2\,dx.
	\end{equation}
	In the gap, the evanescent field vanishes when $x\rightarrow+\infty$, and hence $W(+\infty,\omega)=0$. Using \eqref{dZdOm} and the fact that $u$ is real in a gap leads to
	\begin{equation}
		\frac{dZ_R}{d\omega}=-\frac{2\,\omega}{\left(Eu'(0^+)\right)^2}\int_0^{+\infty}\rho\,u^2\,dx<0.
		\label{dZRdOm}
	\end{equation}
	A similar argument is used to prove that the left surface impedance decreases: $\frac{dZ_L}{d\omega}<0$ in a gap. These results are summed up in the next property.
	
	\begin{Property}
		In the gap $\Omega_n$, the surface impedance decreases with frequency:
		\begin{equation}
			\frac{dZ}{d\omega}<0,\qquad \omega\in \Omega_n.
			\label{dZdOm}
		\end{equation}
		\label{PropdZdOm}
	\end{Property}
	
	\noindent
	Property \ref{PropdZdOm} implies that $Z$ never vanishes and never becomes infinite inside a gap. Indeed, let us assume for instance that $u_n\in{\cal S}$ at $q=0$. Table \ref{TabZ} states that $Z:=Z_n=\pm \infty$ at $\varepsilon_n^+$. In the gap $\Omega_n$, $Z$ decreases monotonically, and Property \ref{PropParityChange} states that $u_{n+1}\in {\cal A}$, so that $Z=0$ at $\varepsilon_{n+1}^-$. It fixes the sign $Z_n=+ \infty$ at $\varepsilon_n^+$ and ensures that $Z\neq 0$ and $Z \neq \infty$ on $]\varepsilon_n^+,\varepsilon_{n+1}^-[$. A similar argument can be used with $u_n\in{\cal A}$ at $q=0$, and also at $q=\pi$. An alternative proof of this property based on the transfer matrix is proposed in \ref{Sec-TM}. 
	
	
	\subsection{Topologically protected interface mode}\label{SecZ-EdgeState}
	
	Now we are ready to state the main result of this article.
	
	\begin{Theorem}
		Let two mirror symmetric PCs in perfect contact at $x=0$ with a non-empty common gap $\Omega_n$ \eqref{OmegaN}. Let also ${\cal J}_{n,L}$ and ${\cal J}_{n,R}$ be the bulk topological indices \eqref{IndiceJ} of PC-L and PC-R, respectively.
		Without change of symmetry in left and right Bloch modes, ie
		\begin{equation}
			{\cal J}_{n,L}+{\cal J}_{n,R}\neq 0,
		\end{equation}
		then no interface mode exists. In the opposite case of left and right Bloch modes with different symmetries, ie
		\begin{equation}
			{\cal J}_{n,L}+{\cal J}_{n,R}= 0,
		\end{equation}
		then there exists a unique interface mode in $\Omega_n$ at $\varepsilon_n^\sharp$. This mode is topologically protected as it is maintained by continuous transformations of the PCs as long as no Dirac point is reached.
		\label{TheoZ}
	\end{Theorem}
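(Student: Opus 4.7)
First, I would apply Property~\ref{PropZLZR} to reduce the theorem to a statement about zeros of $F(\omega)=Z_L(\omega)+Z_R(\omega)$ on $\Omega_n$. Since $Z_L$ and $Z_R$ are strictly decreasing on $\Omega_n$ by Property~\ref{PropdZdOm}, so is $F$, and any zero of $F$ is therefore automatically unique; only the existence/non-existence dichotomy requires work.

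Next, I would translate the bulk indices ${\cal J}_{n,L}$ and ${\cal J}_{n,R}$ into sign information for the impedances on each individual gap. The remark already made after Property~\ref{PropdZdOm} shows that $Z$ is never $0$ nor $\pm\infty$ on the open interior of a gap; combining this with Table~\ref{TabZ}, the symmetry change of Property~\ref{PropParityChange}, and strict monotonicity, exactly one of two scenarios holds on each individual gap: either (A) $Z$ decreases from $+\infty$ to $0$ and is strictly positive on the interior, or (B) $Z$ decreases from $0$ to $-\infty$ and is strictly negative. Table~\ref{TabZ} shows that the scenario is fixed by $u_n^+$ and by the quasimomentum of the band edge, which is $q=\pi$ for $n$ odd and $q=0$ for $n$ even; since $n$ is the same on both sides of the interface, the rule mapping ${\cal J}$ to scenario (A) or (B) is identical for PC-L and PC-R. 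Consequently, ${\cal J}_{n,L}={\cal J}_{n,R}$ places both impedances in the same scenario, while ${\cal J}_{n,L}=-{\cal J}_{n,R}$ places them in opposite scenarios.

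The conclusion is then immediate. If ${\cal J}_{n,L}+{\cal J}_{n,R}\neq 0$, both $Z_L$ and $Z_R$ have the same strict sign on the interior of $\Omega_n$, hence $F$ has that same sign and cannot vanish: no interface mode exists. If ${\cal J}_{n,L}+{\cal J}_{n,R}=0$, a short case check on whether each endpoint of $\Omega_n$ is realized by PC-L, PC-R, or both, shows that $F(\varepsilon_n^+)>0$ (possibly $+\infty$) and $F(\varepsilon_{n+1}^-)<0$ (possibly $-\infty$); continuity and strict monotonicity then yield a unique zero in the open gap, which is the sought-after interface mode. The main technical point I expect is the situation where $\Omega_n$ is strictly smaller than both individual gaps, so that neither endpoint of $\Omega_n$ is a band edge of both PCs simultaneously; one must then combine the edge value of the impedance of one PC with the (already known) interior sign of the other to fix the sign of $F$ at that endpoint.

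Finally, topological protection comes from the discrete nature of the index: ${\cal J}_n$ is determined by the symmetry class of $u_n^+$, an element of $\{{\cal S},{\cal A}\}$. As long as the $n$th band is isolated, Bloch modes depend continuously on the parameters $(\rho,E)$, so ${\cal J}_n$ cannot change without a band crossing, i.e.\ a Dirac point. Under continuous mirror-symmetric deformations that keep $\Omega_n$ open, both ${\cal J}_{n,L}$ and ${\cal J}_{n,R}$ are therefore preserved, and the existence of the interface mode is maintained.
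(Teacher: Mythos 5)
Your proposal is correct and follows essentially the same route as the paper's proof: reduce to zeros of $Z_L+Z_R$ via Property~\ref{PropZLZR}, use Table~\ref{TabZ}, Property~\ref{PropParityChange} and the strict monotonicity of Property~\ref{PropdZdOm} to fix the sign of each impedance across its gap, handle the endpoints of $\Omega_n$ by the same case check (edge value of one PC combined with the interior sign of the other), and obtain topological protection from the discreteness of the symmetry label under deformations that keep the band isolated. Your explicit remark that the quasimomentum ($q=\pi$ or $q=0$) is the same for both PCs since they share the gap index $n$ is a slight clarification of what the paper treats case by case, but it is not a different method.
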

	
	\begin{proof}
		Let us assume that the mode symmetries are the same for each PC and two overlapping gaps  (trivial interface). For instance, $u_{n,L}^+\in{\cal S}$ and $u_{n,R}^+\in{\cal S}$ at $q=0$. Table \ref{TabZ} and Property \ref{PropdZdOm} imply that $Z_L$ decreases from $+\infty$ to 0 when the  frequency varies from $\varepsilon_{n,L}^+$ to $\varepsilon_{n+1,L}^-$. Similarly, $Z_R$ decreases from $+\infty$ to 0 when the frequency varies from $\varepsilon_{n,R}^+$ to $\varepsilon_{n+1,R}^-$. It follows that $Z_L+Z_R>0$ never vanishes on $\Omega_n$. Property \ref{PropZLZR} ensures thus that no interface mode exists. The proof in the cases $u_{n,L}^+\in{\cal A}$ and $u_{n,R}^+\in{\cal A}$, as well as $q=\pi$, follows exactly the same lines.
		
		Next one considers the case of opposite symmetries, say $u_{n,L}^+\in{\cal S}$ and $u_{n,R}^+\in{\cal A}$ at $q=0$, (topological interface). As in the previous case, $Z_L$ decreases from $+\infty$ to 0 when the frequency varies from $\varepsilon_{n,L}^+$ to $\varepsilon_{n+1,L}^-$. But the antisymmetry of $u_{n,R}^+$ implies that $Z_R$ decreases from 0 to $-\infty$ when the frequency varies from $\varepsilon_{n,R}^+$ to $\varepsilon_{n+1,R}^-$. Now, the lower edge of the gap is at $\varepsilon_{n,L}^+$ or at $\varepsilon_{n,R}^+$. At that lower edge, $Z_L+Z_R$ is always strictly positive: in the first case, $(Z_L+Z_R)_{\varepsilon = \varepsilon_{n,L}^+} = +\infty$, while in the second case $(Z_L+Z_R)_{\varepsilon = \varepsilon_{n,R}^+} = Z_L(\varepsilon_{n,R}^+)>0$. Similarly, on the upper edge, either at $\varepsilon_{n+1,L}^-$ or at $\varepsilon_{n+1,R}^-$, $Z_L+Z_R$ is always strictly negative: $(Z_L+Z_R)_{\varepsilon = \varepsilon_{n+1,L}^-} = Z_R(\varepsilon_{n+1,L}^-)<0$ in the first case and $(Z_L+Z_R)_{\varepsilon = \varepsilon_{n+1,L}^-} = -\infty$ in the second case. Hence, by continuity and monotony of $Z_L+Z_R$ (Property~\ref{PropdZdOm}), it vanishes a single time in the $\Omega_n$ interval. By Property~\ref{PropZLZR}, there is a unique interface localized mode in this frequency interval. 
		
		As indicated in Section \ref{SecCS-Zak}, the symmetries of $u_{n,L}^+$ and $u_{n,R}^+$ are maintained by continuous deformations of the $n$th band, as long as it remains isolated. Since the interface mode is a direct consequence of the symmetries of Bloch modes, it is topologically protected.
	\end{proof}
	
	
	\subsection{Absence of edge mode for one-sided systems}\label{SecZ-1sided}
	
	In the particular case of a single PC with Dirichlet boundary condition ($u(0,\omega)=0$) or Neumann boundary condition ($E\,u'(0,\omega)=0$), then only one surface impedance is involved, for instance $Z_R$. A mode localized near $x=0$ will exist if and only if $Z_R=0$ for a Neumann boundary condition and $Z_R = \infty$ for a Dirichlet one. In this case, we talk about edge modes rather than interface modes. However, at the entry of the $n$th gap $\Omega_n$, it takes the value $Z_n=0$ or $Z_n=+\infty$, respectively. $Z$ then decreases monotonically and never vanishes nor becomes infinite. This was already mentioned after Property~\ref{PropdZdOm}, but it results in the following interesting Property.
	
	\begin{Property}
		Let us consider one semi-infinite mirror symmetric PC, with Dirichlet or Neumann boundary conditions. Then no edge mode exists.
		\label{Prop1sided}
	\end{Property}
	
	\noindent
	This is a rather surprising result. First, it is in direct contrast with discrete systems (e.g. lattice models) where edge modes can be found and protected by a quantized Zak phase. Second, it shows explicitly that the bulk-boundary correspondence only works for interfaces in the case of continuous systems with mirror symmetry. A third striking consequence of Property \ref{Prop1sided} is that edge modes (with Dirichlet or Neumann boundary conditions) may exist only if one breaks the mirror symmetry (either in the bulk or by an appropriate choice of edge). This is the case for instance in the waveguide realisation of SSH \cite{Coutant21}. 
	
	
	\subsection{Generalizations}\label{SecZ-Gene}
	
	We now discuss some possible generalizations of the previous results. Our aim is to show that the bulk-boundary correspondence as established here, i.e. Theorem~\ref{TheoZ}, stands for many popular cases beyond equation~\eqref{Helmholtz}, for instance when resonators are added or with imperfect interfaces. This latter case will be used in Section~\ref{SecNum} to illustrate the findings. 
	
	On the elementary cell $[0,h]$, a set of $N$ interfaces is considered and is denoted by ${\cal I}=\{x_1,\cdots,x_N\}$. Between each interface, the Helmholtz equation takes the generalized form: 
	\begin{equation}
		\frac{d}{dx}\left(E(x)\,\frac{du}{dx}\right)+V(x,\omega)\,u =0 , 
		\label{Helmholtz-Potentiel}
	\end{equation}
	and at $x_j$, the fields satisfy the jump conditions:
	\begin{equation}
		\jump{u}_{x_j}=\alpha_j\mean{E\,\frac{du}{dx}}_{x_j},\qquad \jump{E\,\frac{du}{dx}}_{x_j}=-W_j(\omega)\mean{u}_{x_j},
		\label{JCspringmass}
	\end{equation}
	with $\alpha_j\geq 0$ and $W_j(\omega) \geq 0$. In \eqref{JCspringmass}, $\jump{.}_{x_j}$ and $\mean{.}_{x_j}$ denote the jump and the mean value at the interface $x_j$, respectively, and they are defined for any function $g(x)$ by
	\begin{equation}
		\jump{g}_{x_j}=g(x_j^+)-g(x_j^-), \qquad \mean{g}_{x_j}=\frac{1}{2}\left(g(x_j^+)+g(x_j^-)\right).
		\label{NotaSM}
	\end{equation}
	The PCs are still assumed to be mirror symmetric: $V$ is an even function of $x$ for all $\omega$, and the interfaces are placed symmetrically in each unit cells.  Between each interface, the spatial evolution of the Wronskian \eqref{dZdX} becomes
	\begin{equation}
		{\cal W}'=  \partial_\omega V \,u^2.
	\end{equation}
	Integrating across all interfaces and using the jump conditions \eqref{JCspringmass}, we obtain the frequency evolution of $Z_R$, which generalizes \eqref{dZRdOm}: 
	\begin{equation}
		\frac{dZ_R}{d\omega}=-\frac{1}{\left(Eu'(0^+)\right)^2} \left(\int_0^{+\infty}\partial_\omega V \,u^2\,dx + \sum_{n=0}^{+\infty} \sum_{j=1}^{N}\partial_\omega W_j \,u^2(x_j+nh)\right).
	\end{equation}
	The sign of $\frac{dZ_R}{d\omega}$ depends obviously on the sign of $\partial_\omega V$ and of $\partial_\omega W_j$. A similar analysis can be performed on PC-L, yielding the following generalization of Property \ref{PropdZdOm}.
	
	\begin{Corollary}
		Let us consider a mirror symmetric PC described by equation~\eqref{Helmholtz-Potentiel} and the jump conditions~\eqref{JCspringmass}. We assume that $V(x,.)$ and $W_j(.)$ are rational functions of $\omega$ such that 
		\begin{equation}
			\frac{\partial V}{\partial \omega}\geq 0,\qquad \frac{\partial W_j}{\partial \omega}\geq 0 .
			\label{dVdOm}
		\end{equation}
		Then the conclusion of Property \ref{PropdZdOm} holds:
		$\frac{dZ}{d\omega}<0$, and Theorem \ref{TheoZ} is still valid.
		\label{CorodZdOm}
	\end{Corollary}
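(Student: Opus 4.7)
The plan is to reproduce the Wronskian argument of Property \ref{PropdZdOm} in the extended setting, the only novelty being that the global balance of ${\cal W}$ from $0^+$ to $+\infty$ acquires one extra contribution per interface. Setting $\varphi=du/d\omega$ and ${\cal W}=E(\varphi u'-u\varphi')$ as before, differentiation of \eqref{Helmholtz-Potentiel} with respect to $\omega$ yields, on each smooth piece,
\begin{equation*}
{\cal W}'(x,\omega)=\partial_\omega V(x,\omega)\,u^2(x),
\end{equation*}
which replaces \eqref{dZdX} and whose right-hand side is nonnegative under \eqref{dVdOm}.

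The decisive step is the computation of $\jump{{\cal W}}_{x_j}$. Differentiating the jump conditions \eqref{JCspringmass} with respect to $\omega$ (legitimate inside $\Omega_n$ since $V(x,\cdot)$ and $W_j(\cdot)$ are rational and no poles lie in the gap) yields
\begin{equation*}
\jump{\varphi}_{x_j}=\alpha_j\mean{E\varphi'}_{x_j},\qquad \jump{E\varphi'}_{x_j}=-\partial_\omega W_j\,\mean{u}_{x_j}-W_j\mean{\varphi}_{x_j}.
\end{equation*}
Applying the product rule $\jump{ab}_{x_j}=\jump{a}_{x_j}\mean{b}_{x_j}+\mean{a}_{x_j}\jump{b}_{x_j}$ to $\jump{\varphi\cdot Eu'-u\cdot E\varphi'}_{x_j}$, the terms proportional to $\alpha_j\mean{Eu'}\mean{E\varphi'}$ cancel pairwise and so do the $\pm W_j\mean{u}\mean{\varphi}$ terms, leaving the clean identity $\jump{{\cal W}}_{x_j}=\partial_\omega W_j\,\mean{u}^2_{x_j}$. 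Summing the smooth increments and the interface jumps on $[0,+\infty)$, and using that ${\cal W}(+\infty,\omega)=0$ because $u$ and $\varphi$ are evanescent in the gap, reproduces the expression displayed just above the corollary statement. Dividing by $(Eu'(0^+))^2>0$ and invoking \eqref{dVdOm}, together with the fact that $u$ is a nontrivial solution of a second-order problem and hence cannot vanish identically, gives $dZ_R/d\omega<0$; the analogous analysis on $(-\infty,0]$ handles $Z_L$.

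With Property \ref{PropdZdOm} thus extended, Theorem \ref{TheoZ} is inherited essentially verbatim, because its proof only uses the parity alternative at the band edges (Property \ref{PropCS}, unchanged under mirror symmetry of $V$ and of the interface positions), the induced boundary values of $Z$ (Table \ref{TabZ}), and the strict in-gap monotony just established. The main obstacle is therefore the algebra of the interface jump: one must verify that the $\alpha_j$-terms and the cross terms $W_j\mean{u}\mean{\varphi}$ really cancel so that $\jump{{\cal W}}_{x_j}$ reduces to a single nonnegative contribution proportional to $\partial_\omega W_j$. Once that cancellation is checked, the sign of $dZ/d\omega$ is controlled solely by the signs of $\partial_\omega V$ and $\partial_\omega W_j$, and the rational-in-$\omega$ hypothesis guarantees that $\varphi$ is well defined across the whole gap $\Omega_n$.
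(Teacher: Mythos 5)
Your proof is correct and follows essentially the same route as the paper: differentiate the generalized Helmholtz equation with respect to $\omega$, integrate ${\cal W}'=\partial_\omega V\,u^2$ on the smooth pieces, add the interface contributions, use ${\cal W}(+\infty,\omega)=0$, and conclude from \eqref{dVdOm} that $dZ/d\omega<0$ so that Theorem \ref{TheoZ} carries over. Your explicit cancellation giving $\jump{{\cal W}}_{x_j}=\partial_\omega W_j\,\mean{u}^2_{x_j}$ is precisely the step the paper performs implicitly when it writes its summed formula for $dZ_R/d\omega$ (and is in fact slightly more careful, since the paper's $u^2(x_j+nh)$ is ambiguous when $\alpha_j\neq 0$ makes $u$ discontinuous).
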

	Now we discuss three models involving \eqref{Helmholtz-Potentiel} and \eqref{JCspringmass}. In each case, the inner product \eqref{ProdScal} needs to be modified to ensure the self-adjointness of the Hamiltonian. This topic is not pursued here, but references are given to the Reader.
	
	\paragraph{Model 1: imperfect contacts}As a first example, one considers the bulk potential $V(x,\omega)=\rho(x)\,\omega^2$, with the interface parameters and the interface potentials
	\begin{equation}
		\alpha_j=1/K_j,\qquad W_j(\omega)=M_j\,\omega^2.
		\label{SMparameters}
	\end{equation}
	In \eqref{SMparameters}, $K_j>0$ and $M_j\geq 0$ are stiffness and mass terms. The mirror symmetry implies that $K_j=K_{N-j+1}$ $(j=1,\cdots,N$) and similarly for $M_j$. This model describes imperfect transmission of elastic waves through glue layers or cracks \cite{Assier20}. Conservation of energy is proven in \cite{Bellis21}. The usual case of perfect contact is recovered when $K_j\rightarrow +\infty$ and $M_j=0$. The inner product is defined in Appendix A of \cite{Assier20}. Since $\partial_\omega W_j=2\,M_j\,\omega \geq 0$, the assumptions of Corollary \ref{CorodZdOm} are satisfied. 
	
	\paragraph{Model 2: Helmholtz resonators}A second example concerns the propagation of acoustic waves in a waveguide connected with an array of Helmholtz resonators. This configuration is modelled by the Helmholtz equation \eqref{Helmholtz} and the jump conditions
	\begin{equation}
		\jump{p}_{x_j}=0,\hspace{1cm}\jump{v}_{x_j}=\mathrm{i}\,g_j\,\frac{\omega^2}{\omega^2-\omega_j^2}\,p(x_j),
		\label{JC-HR}
	\end{equation}
	where $v=-p'/(\mathrm{i}\,\omega\rho)$ is the acoustic velocity. The resonance frequencies $\omega_j$ and the coupling coefficients $g_j>0$ are related with the geometry of the $j$th Helmholtz resonator. The jump conditions \eqref{JC-HR} can be replaced by \eqref{JCspringmass} with
	\begin{equation}
		\alpha_j=0,\qquad W_j=-g_j\,\frac{\omega^2}{\omega^2-\omega_j^2}.
		\label{V-HR}
	\end{equation}
	Since
	\begin{equation}
		\partial_\omega W_j=2\,g_j\,\omega_j^2\,\frac{\omega}{\left(\omega^2-\omega_j^2\right)^2}>0,
		\label{dVdOm-HR}
	\end{equation}
	the assumptions of Corollary \ref{CorodZdOm} are satisfied. This argument can be used to prove rigorously the existence of topologically protected interface mode in a guide connected with a one-dimensional array of Helmholtz resonators \cite{Zhao21}.
	
	\paragraph{Model 3: dispersive media}As a third and last example, we consider dispersive media with null jump conditions ($\alpha_j=0$ and $W_j=0$ in \eqref{JCspringmass}) but with frequency-dependent compressibility:
	\begin{equation}
		\kappa^{-1}(x,\omega)=\kappa^{-1}_0(x)\left(1-\frac{\Omega_\kappa^2}{\omega^2-\omega_\kappa^2}\right),
		\label{Drude}
	\end{equation}
	with $\kappa_0(x)>0$. Such a parameter is generally obtained through an homogenization process. See \cite{Bellis19} and references therein for application of \eqref{Drude} to acoustics in a guide with an array of Helmholtz resonators in the low frequency range. In photonics, similar expressions hold for the permittivity and / or the permeability in the Drude-Lorentz model. The adequate  inner product is defined in \cite{Cassier17}. Injected in the acoustic Helmholtz equation, the parameter \eqref{Drude} leads to the potential \eqref{Helmholtz-Potentiel} with
	\begin{equation}
		V(x,\omega)=\kappa^{-1}(x,\omega)\,\omega^2.
		\label{Potentiel-Disperse}
	\end{equation}
	Since
	\begin{equation}
		\partial_\omega V=2\,\kappa_0^{-1}(x)\,\omega\left(1+\left(\frac{\Omega_\kappa\,\omega_\kappa}{\omega^2-\omega_\kappa^2}\right)^2\right)>0,
		\label{dVdOm-Drude}
	\end{equation}
	the assumptions of Corollary \ref{CorodZdOm} are satisfied.
	
	
	\section{Numerical examples}\label{SecNum}
	
	\subsection{Configuration}\label{SecNumConfig}
	
	\begin{figure}[htbp]
		\begin{center}
			\begin{tabular}{c}
				\includegraphics[scale=0.45]{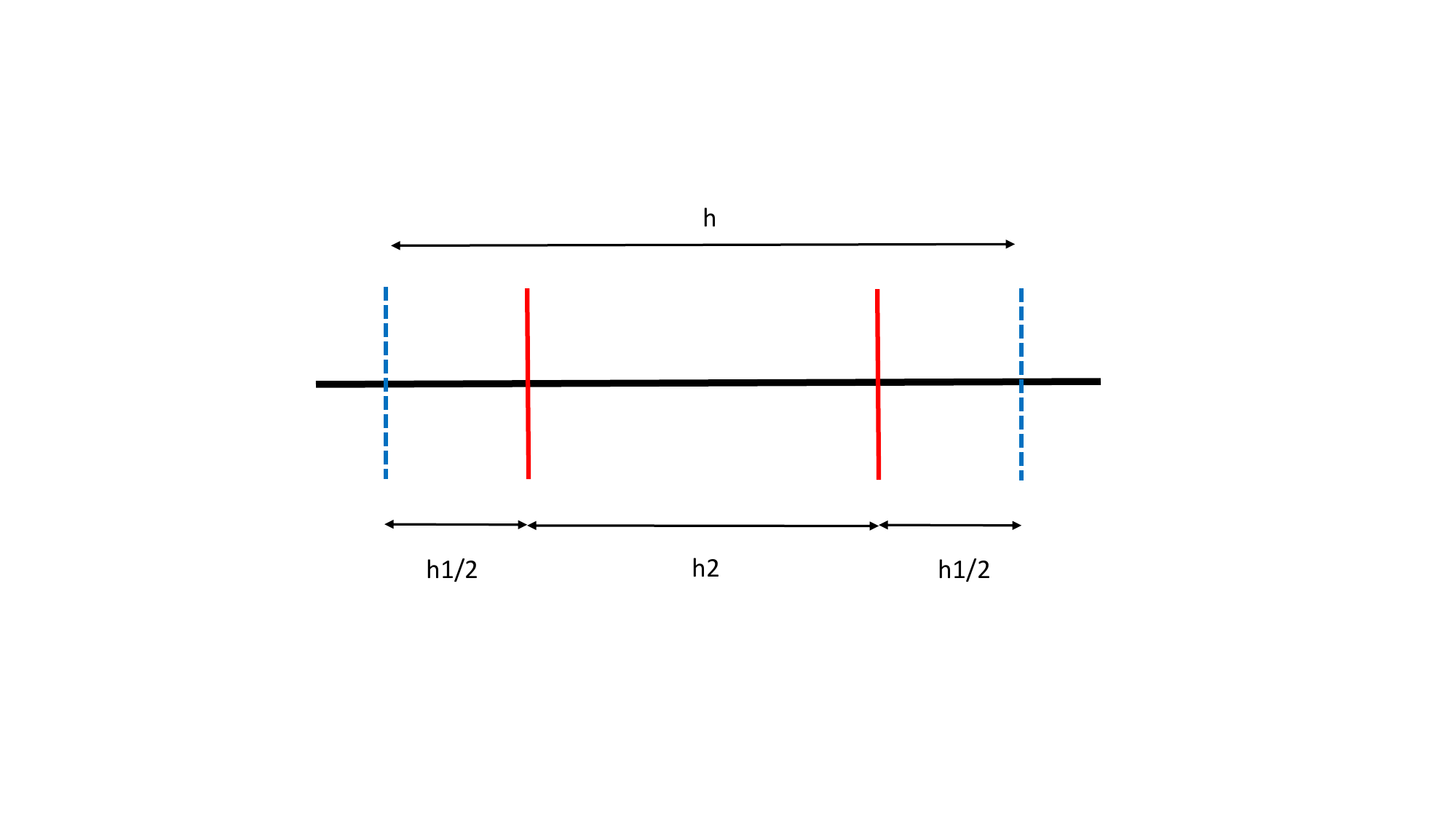} 
			\end{tabular}
		\end{center}
		\vspace{-2.6cm}
		\caption{\label{FigCentro}Elementary cell of a mirror symmetric PC. The red  vertical solid lines denote the two interfaces with imperfect contacts. The blue vertical dotted lines denote the edges of the cell.} 
	\end{figure}
	
	The elementary cell 
	contains two interfaces with imperfect contacts \eqref{JCspringmass}-\eqref{SMparameters} of identical stiffness 
	$Kh/E=5.2$ and $M=0$. The cell is mirror symmetric: the positions of the interfaces in the cell are denoted by $h_1=\theta h$ and $h_2=(1-\theta)h$, with $0<\theta<1$ (Figure \ref{FigCentro}). The transformation $\theta\mapsto1-\theta$ yields a similar PC with identical gaps, contrary to the case of perfect contacts separating different media \cite{Xiao14}. However, the symmetries of the Bloch modes may differ, which will be useful for the construction of topologically protected interface modes. 
	
	\begin{figure}[h!]
		\begin{center}
			\begin{tabular}{cc}
				(a) & (b)\\
				\hspace{-1cm}
				\includegraphics[width=0.55\textwidth]{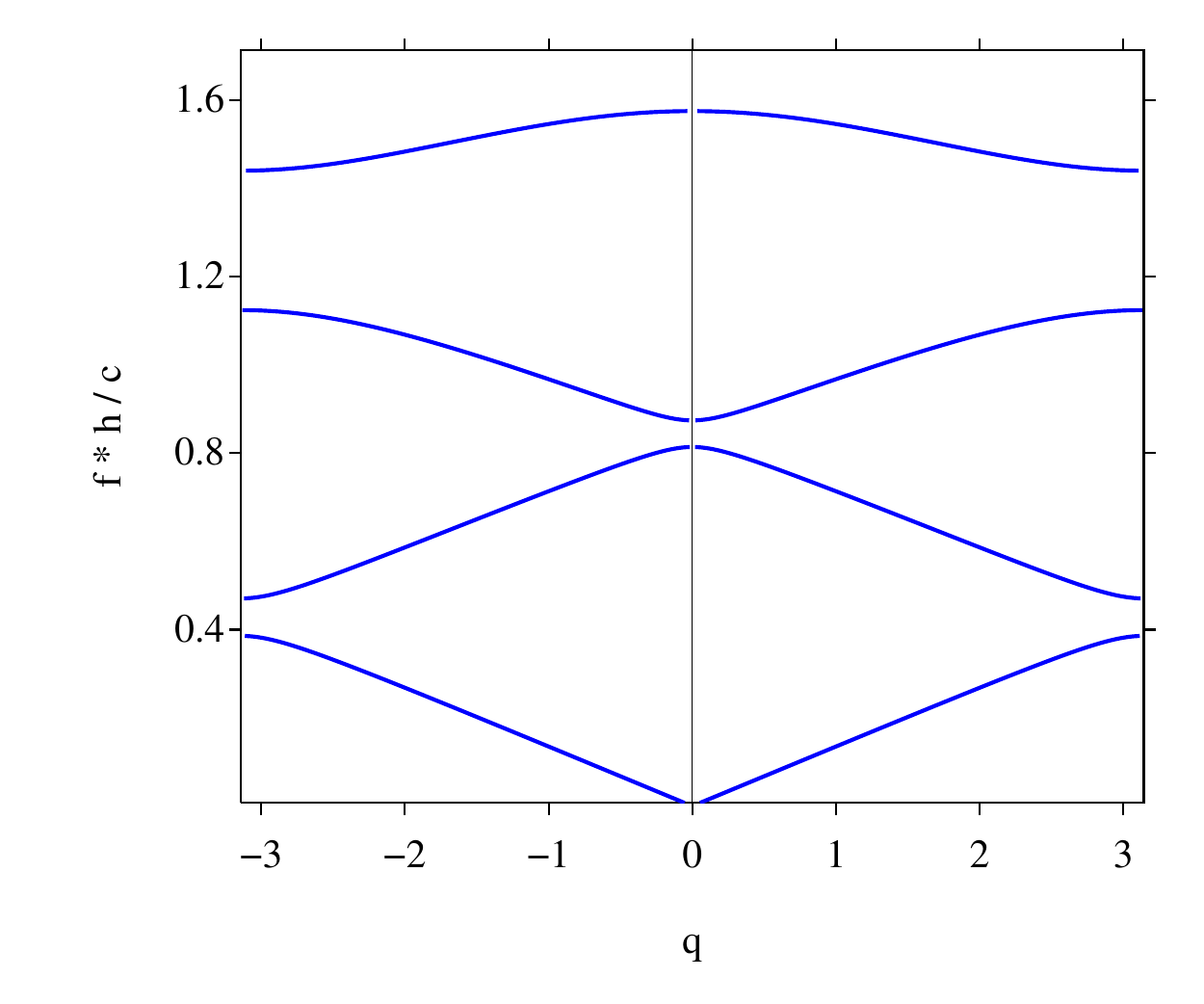} &
				\hspace{-1cm}
				\includegraphics[width=0.55\textwidth]{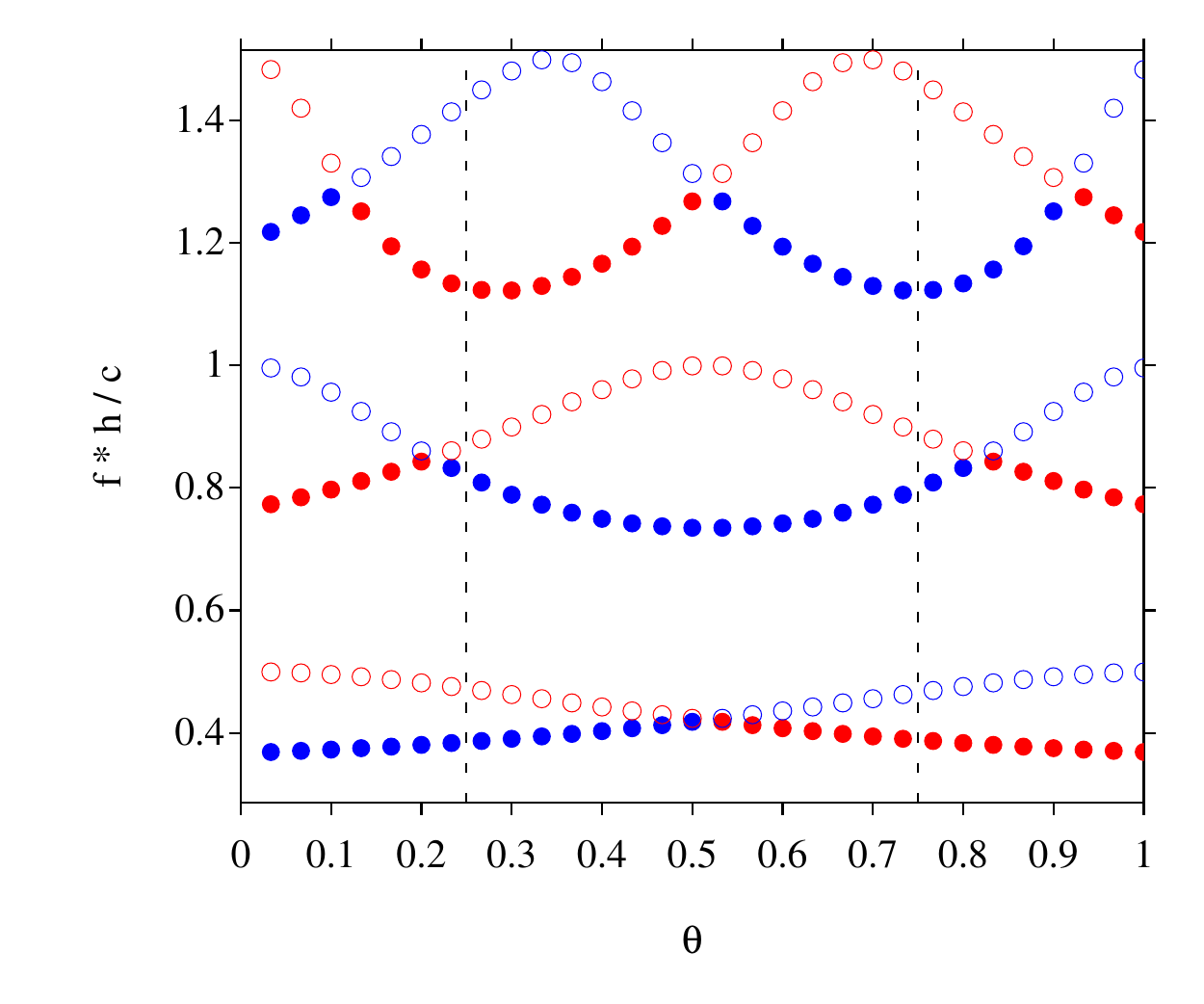} 
			\end{tabular}
		\end{center}
		\vspace{-0.6cm}
		\caption{\label{FigFloquet}(a) Bloch-Floquet dispersion diagram in the case $\theta=0.25$, with the Bloch shift $q=kh$, in scaled frequency (a). Parametric study of the Bloch modes at the band edges, in terms of $\theta$ (b). The full and empty circles denote the lower edge $\varepsilon_n^+$ and the upper edge $\varepsilon_{n+1}^-$ of the gap $\Omega_n$ ($n=1\cdots 3$), respectively. The blue and red circles denote symmetric and antisymmetric Bloch modes, respectively. The vertical dotted lines denote $\theta=0.25$ and $\theta=0.75$.} 
	\end{figure}
	
	Figure \ref{FigFloquet}-(a) displays the Bloch-Floquet dispersion diagram in the case $\theta=0.25$. The vertical axis shows the range of scaled frequencies $\tilde{f}=f\times h / c$. A similar scaling is used all along the text; notably, one denotes $\tilde{f}_n^\pm=\sqrt{\varepsilon_n^\pm}/(2\pi)\times h/c$. As stated in Property \ref{PropCS}, the Bloch modes at these edges are either symmetric or antisymmetric. Figure \ref{FigFloquet}-(b) displays a parametric study of the symmetries of Bloch modes at the band edges, in terms of $\theta$. The lower frequency $\varepsilon_n^+$ and the upper frequency $\varepsilon_{n+1}^-$ of the gaps $\Omega_n$ are denoted by full and empty circles, respectively. A blue circle represents a symmetric Bloch wave, whereas a red circle represents an antisymmetric Bloch wave. One notices the symmetry of the frequencies with respect to $\theta=0.5$, induced by the invariance of the PC by the transformation $\theta\mapsto 1-\theta$. However, the symmetries of Bloch modes may differ when comparing  $\theta$ and $1-\theta$. 
	
	\begin{figure}[h!]
		\begin{center}
			\begin{tabular}{cc}
				(a) & (b)\\
				\hspace{-1cm}
				\includegraphics[width=0.55\textwidth]{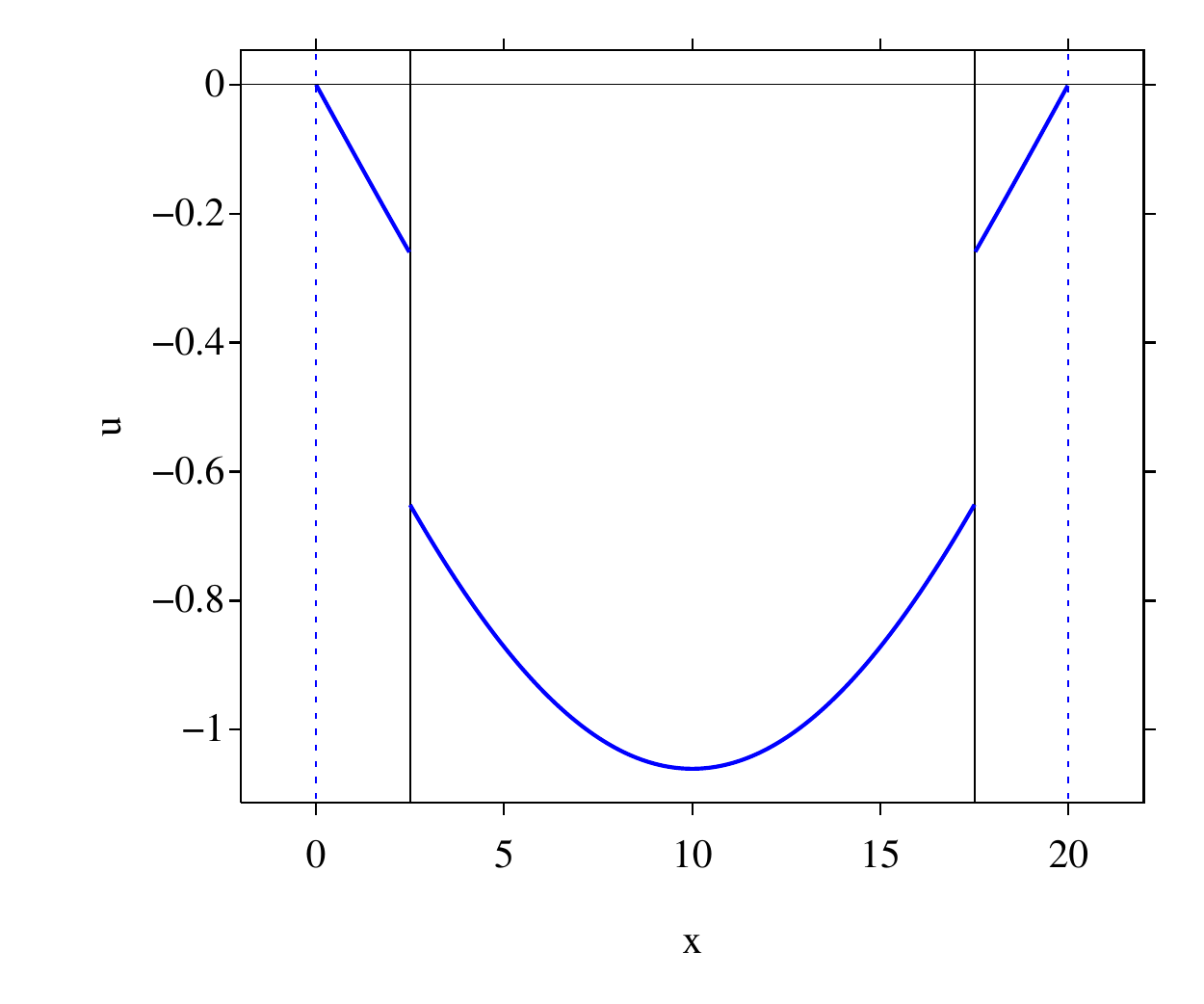} &
				\hspace{-1cm}
				\includegraphics[width=0.55\textwidth]{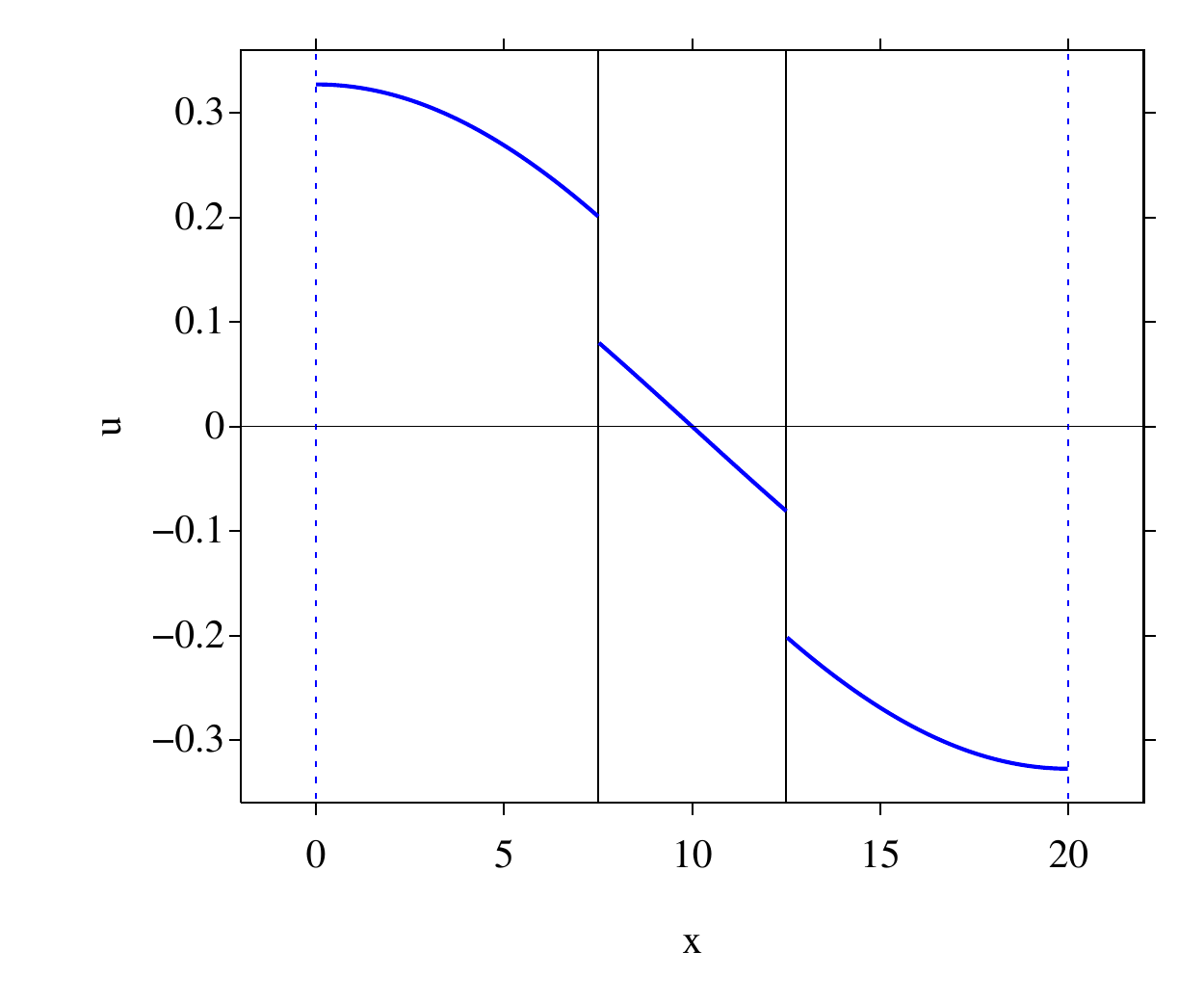}
			\end{tabular}
		\end{center}
		\vspace{-0.6cm}
		\caption{\label{FigModeSym}Bloch wave $u_1$ at the lower edge of gap $\Omega_1$ ($\tilde{f}_1^+=0.385$). (a): $\theta=0.25$, (b): $\theta=0.75$. The vertical solid and dotted lines denote the interfaces and the edges of the elementary cell, respectively.} 
	\end{figure}  
	
	From now on, we focus on two particular values of $\theta$: 0.25 and 0.75. The PCs built using these values are named PC-L and PC-R, respectively. The intersections between the bands and the vertical dotted lines in Figure \ref{FigFloquet}-(b) determine the symmetries of $u_n$ at the band edges. For instance, $u_1$ at the lower edge $\tilde{f}_1^+=0.385$ of the gap $\Omega_1$ is expected to be symmetric (at $\theta=0.25$) and antisymmetric (at $\theta=0.75$). Figure \ref{FigModeSym} displays these Bloch modes; Figure \ref{FigModeSym}-(a) amounts to $u_1$ in Figure \ref{FigBloch}-(a). One observes the change of symmetry of $u_1$ between PC-L and PC-R at this band edge.
	
	\begin{table}[h!]
		\begin{center}
			\begin{small}
				\begin{tabular}{c||c|c|c|c}
					& $n=1$ & $n=2$ & $n=3$ & $n=4$ \\
					\hline 
					$[\tilde{f}_n^+,\tilde{f}_{n+1}^-]$ & $[0.385,\,0.470]$ & $[0.814,\,0.874]$ & $[1.124,\,1.440]$     & $[1.575,\,2.000]$ \\
					PC-L & ${\cal S}$ & ${\cal S}$ & ${\cal A}$ & ${\cal S}$ \\
					PC-R & ${\cal A}$ & ${\cal S}$ & ${\cal S}$ & ${\cal S}$ \\ 
				\end{tabular}
			\end{small} 
		\end{center}
		\vspace{-0.5cm}
		\caption{\label{TabBG}Scaled frequency intervals of the gaps $\Omega_n$ ($n=1\cdots 4$). Symmetries of the Bloch modes at scaled frequencies $\tilde{f}_n^+$, in PC-L ($\theta=0.25$) and PC-R ($\theta=0.75$).}
	\end{table}
	
	The scaled frequencies and the symmetry of $u_n$ at the band edges of $\Omega_n$ ($n=1\cdots 4$) are given in Table \ref{TabBG}. Based on Theorem \ref{TheoZ}, we expect the existence of topologically protected interface modes in the gaps $\Omega_1$ and $\Omega_3$. On the contrary, no interface modes are expected in the gaps $\Omega_2$ and $\Omega_4$, where the symmetries of $u_n$ are identical when $\theta=0.25$ and $\theta=0.75$. Now we glue PC-L and PC-R at $x=0$ to illustrate this claim.
	
	\begin{figure}[h!]
		\begin{center}
			\begin{tabular}{cc}
				(a) $\Omega_1$ & (b) $\Omega_2$ \\
				\hspace{-1cm}
				\includegraphics[width=0.55\textwidth]{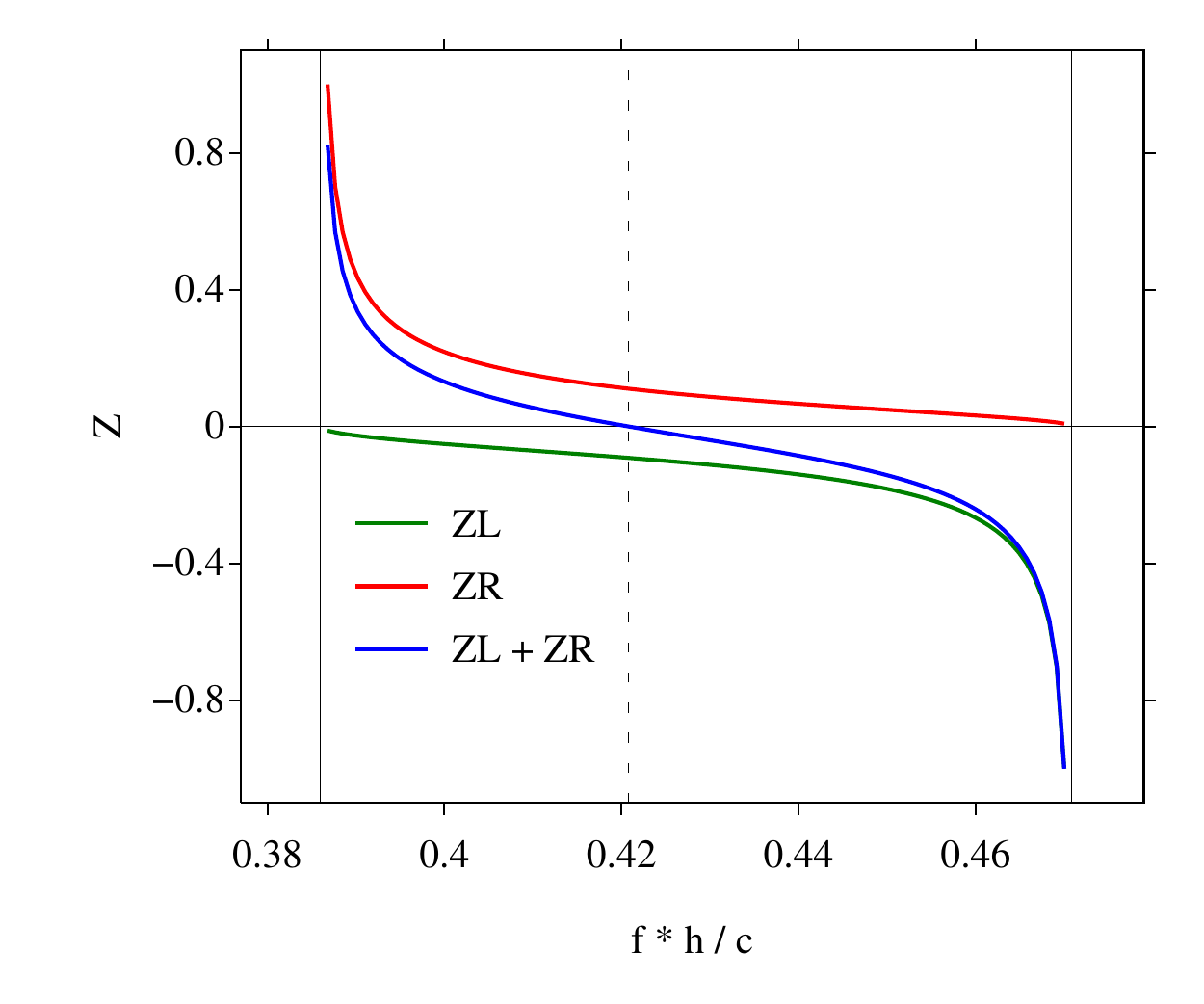} &
				\hspace{-1cm}
				\includegraphics[width=0.55\textwidth]{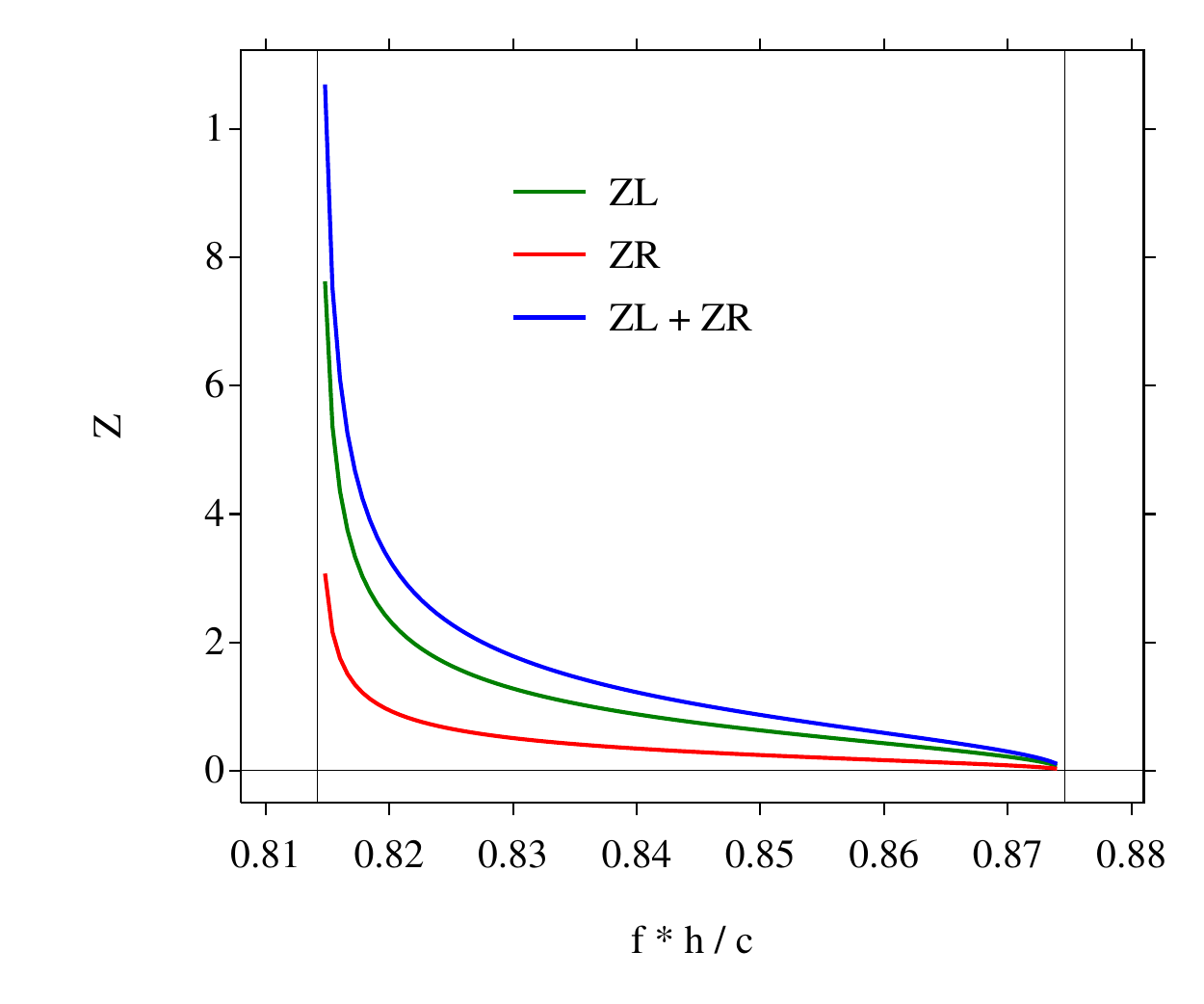} \\
				(c) $\Omega_3$ & (d) $\Omega_4$ \\
				\hspace{-1cm}
				\includegraphics[width=0.55\textwidth]{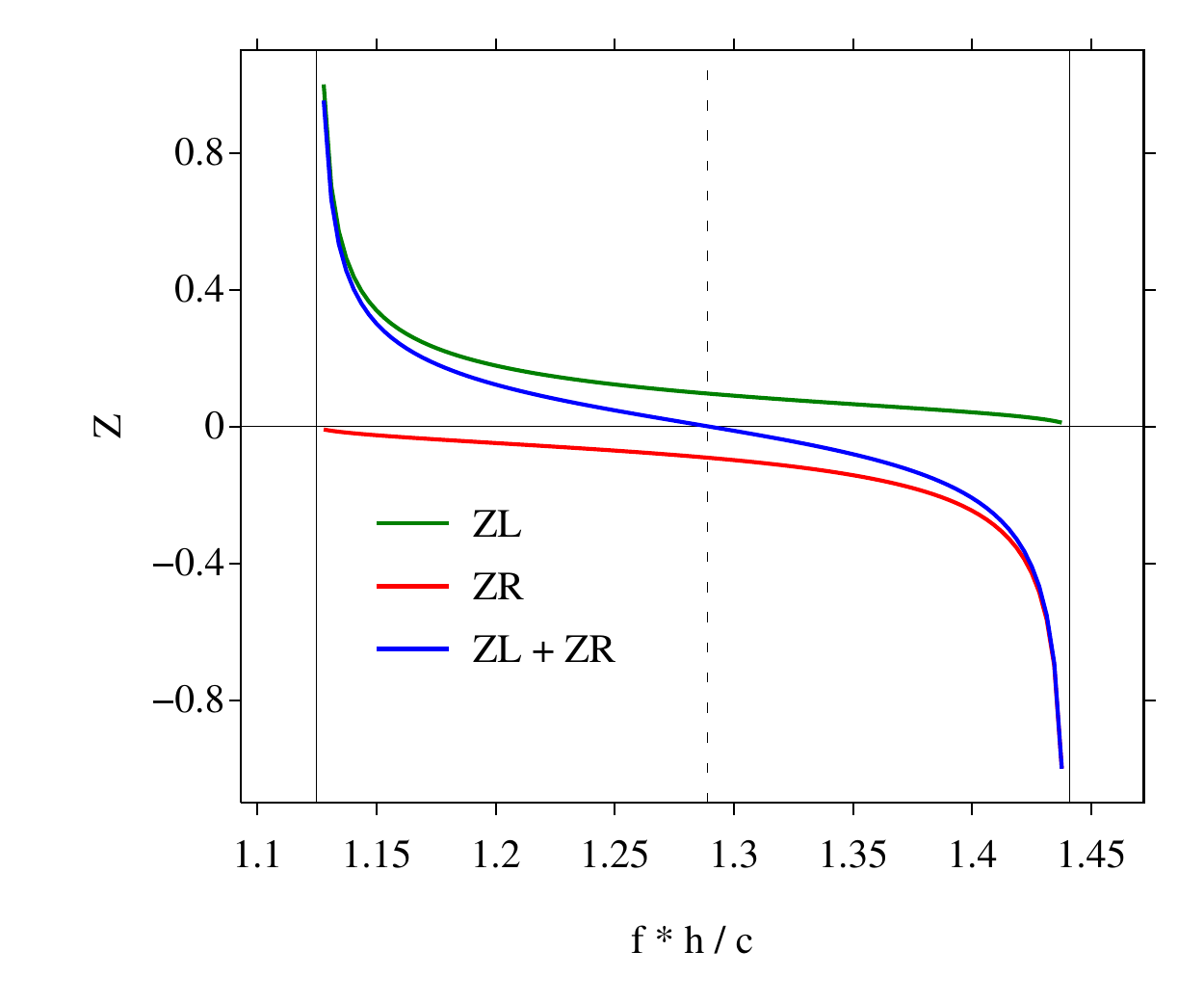} &
				\hspace{-1cm}
				\includegraphics[width=0.55\textwidth]{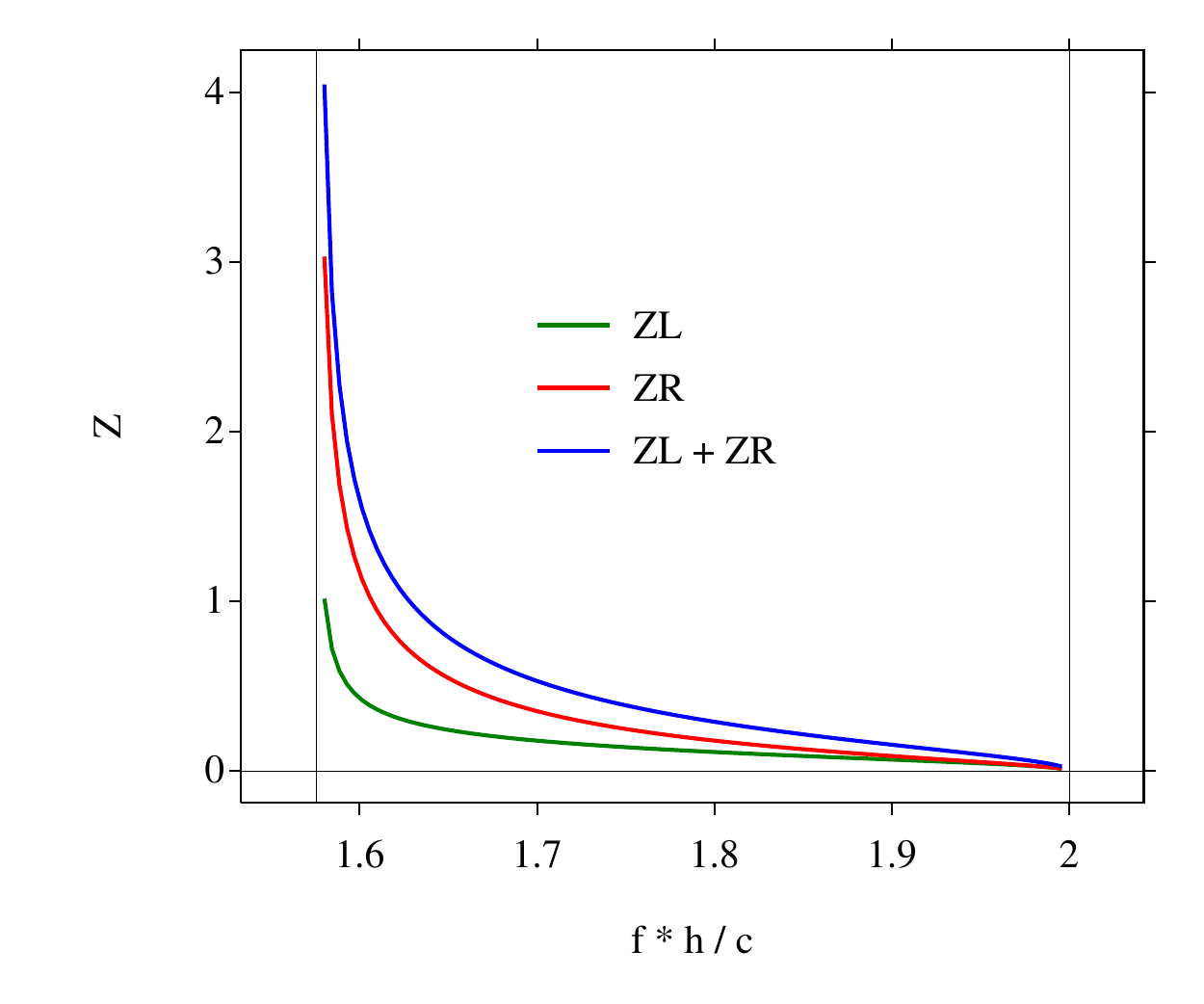}
			\end{tabular}
		\end{center}
		\vspace{-0.6cm}
		\caption{\label{FigZ}Frequency evolution of the surface impedances of the PCs in contact. The vertical solid lines denote the edges of the $n$th gap in scaled frequency. (a-c): gaps $\Omega_1$ and $\Omega_3$, in which a topologically protected interface mode exists. The vertical dotted line denotes the scaled frequency $\tilde{f}_1^\sharp\approx 0.42$ and $\tilde{f}_3^\sharp \approx 1.28$ where $Z_L+Z_R=0$. (b-d): gaps $\Omega_2$ and $\Omega_4$, in which no topologically protected interface mode exists.} 
	\end{figure}
	
	Figure \ref{FigZ} shows the frequency evolution of the surface impedance $Z_L$ in PC-L, of the surface impedance $Z_R$ in PC-R, and finally of their sum. In Figure \ref{FigZ}-(a), we observe that $Z_L+Z_R=0$ at $\tilde{f}_1^\sharp\approx 0.420$  in the gap $\Omega_1$: according to Theorem \ref{TheoZ}, a topologically protected interface mode exists at this frequency. In Figure \ref{FigZ}-(b,d), $Z_L+Z_R \neq 0$: no interface mode exists. In Figure \ref{FigZ}-(c), $Z_L+Z_R=0$ at $\tilde{f}_3^\sharp\approx 1.289$ in the gap $\Omega_3$: once again, a topologically protected interface mode exists at this frequency.
	
	\subsection{Finite slabs}\label{SecNumSlab}
	
	Now, we look for the expected interface modes in a scattering configuration~\cite{Kalozoumis18}. For this purpose, one considers finite PCs consisting of $N$ cells with $\theta=0.25$ on the left ($x<0$), and $N$ cells with $\theta=0.75$ on the right ($x>0$). Figure \ref{FigTra} shows the frequency evolution of the transmission coefficient through the slab of $2N$ cells, for $N=3$ (a) and $N=5$ (b). Successions of $N-1$ equidistant oscillations are observed in the bands, corresponding to the modes of a cavity. More interestingly, isolated peaks are observed in the gaps $\Omega_1$ (at $\tilde{f}\approx 0.420$) and $\Omega_3$ (at $\tilde{f}=1.289$). These scaled frequencies corresponds very accurately to the roots $\tilde{f}_1^\sharp$ and $\tilde{f}_3^\sharp$ of $Z_L+Z_R=0$ observed in Figure \ref{FigZ}-(a,c). As $N$ increases, these peaks become thinner. In \ref{FigTra}-(b), the peak in $\Omega_3$ is so thin that the frequency discretization is insufficient to capture its spatial support.
	
	\begin{figure}[h!]
		\begin{center}
			\begin{tabular}{cc}
				(a) & (b)\\
				\hspace{-1cm}
				\includegraphics[width=0.55\textwidth]{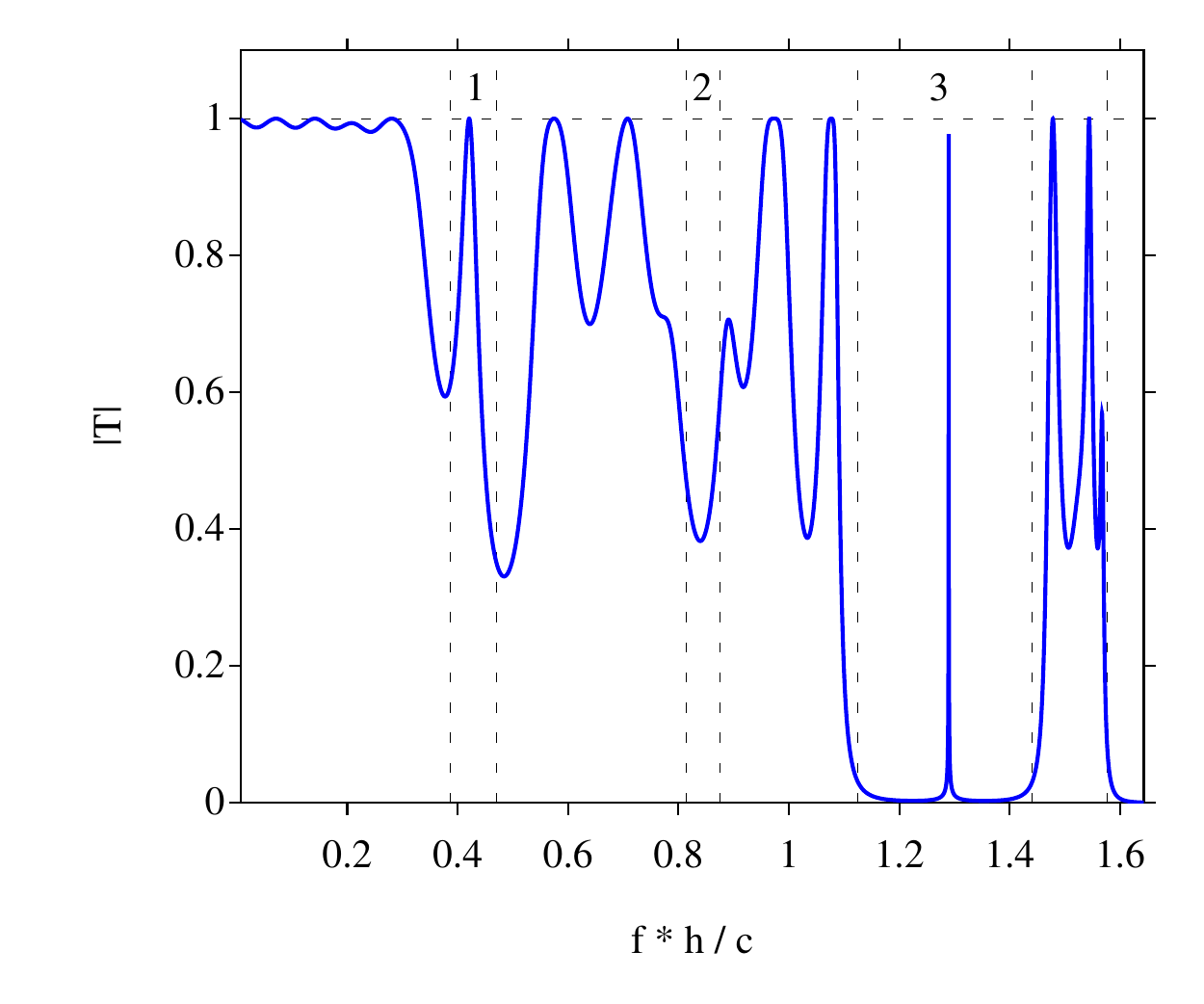} &
				\hspace{-1cm}
				\includegraphics[width=0.55\textwidth]{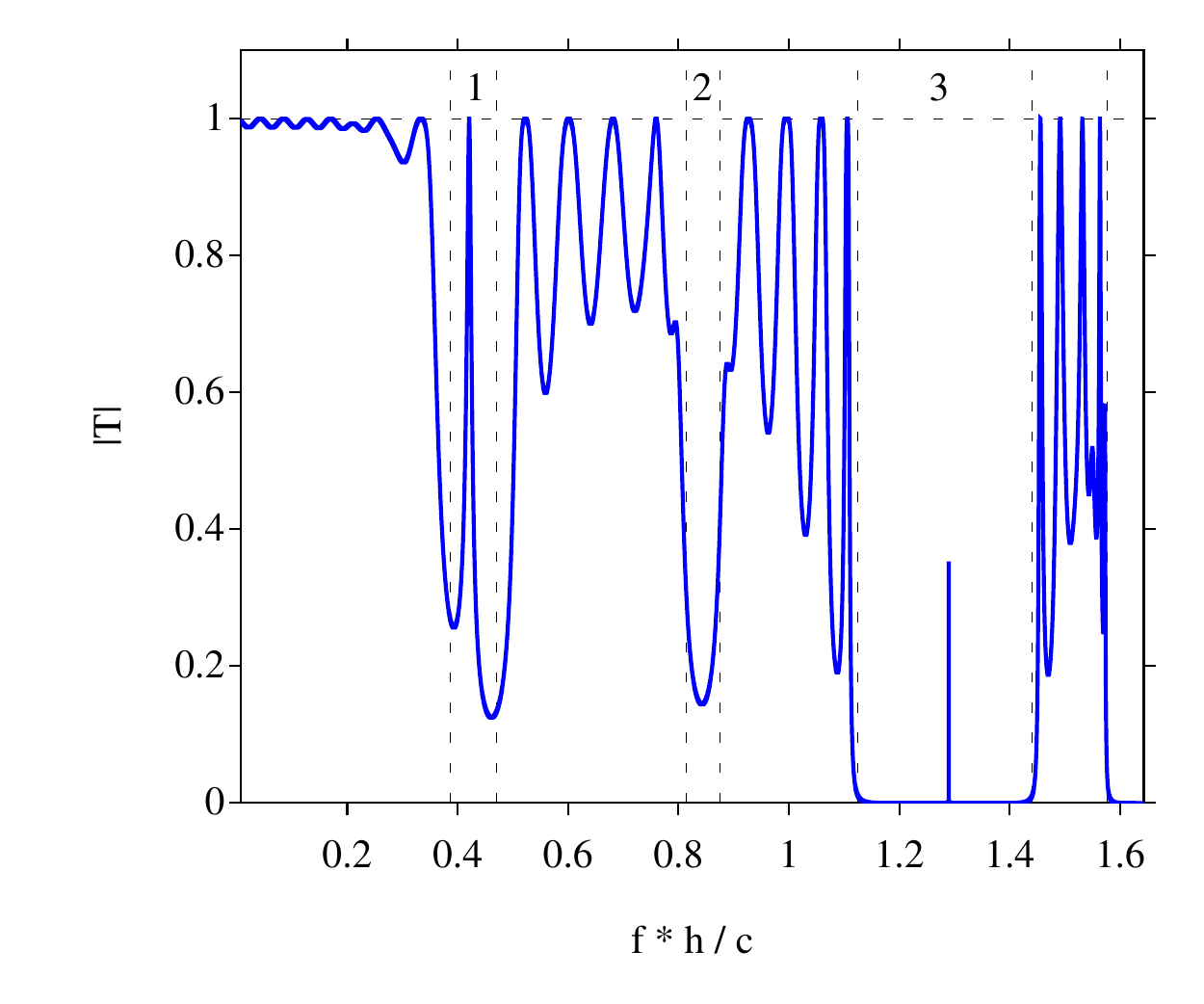} 
			\end{tabular}
		\end{center}
		\vspace{-0.6cm}
		\caption{\label{FigTra}Modulus of the transmission coefficient in the case of $N=3$ cells (a) and $N=5$ cells (b) per PC. The vertical dashed lines denote the lower and upper edges of the gaps $\Omega_n$ ($n=1\cdots 3$), denoted by their numbering. An isolated peak is observed in the gaps $\Omega_1$ and $\Omega_3$.} 
	\end{figure}
	
	Figure \ref{FigModeU} represents the spatial evolution of the modulus of $u$ at $\tilde{f}_1^\sharp$ and $\tilde{f}_3^\sharp$, with $N=3$ cells in PC-L and PC-R. In both cases, an evanescent mode centered on the interface between the PCs is observed. It is a clear signature of an interface mode at the interface between PC-L and PC-R.
	
	\begin{figure}[h!]
		\begin{center}
			\begin{tabular}{cc}
				(a) & (b)\\
				\hspace{-1cm}
				\includegraphics[width=0.55\textwidth]{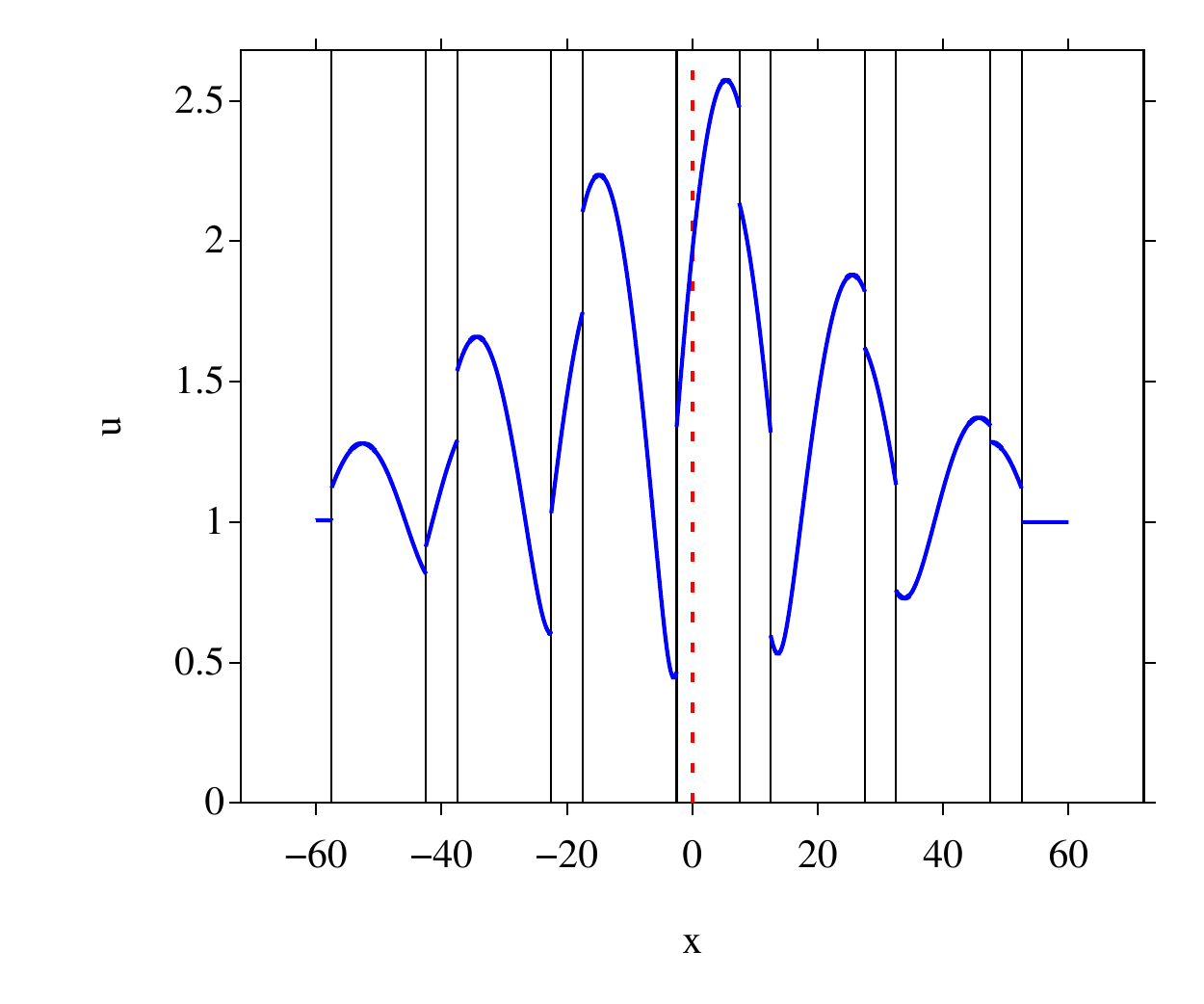} &
				\hspace{-1cm}
				\includegraphics[width=0.55\textwidth]{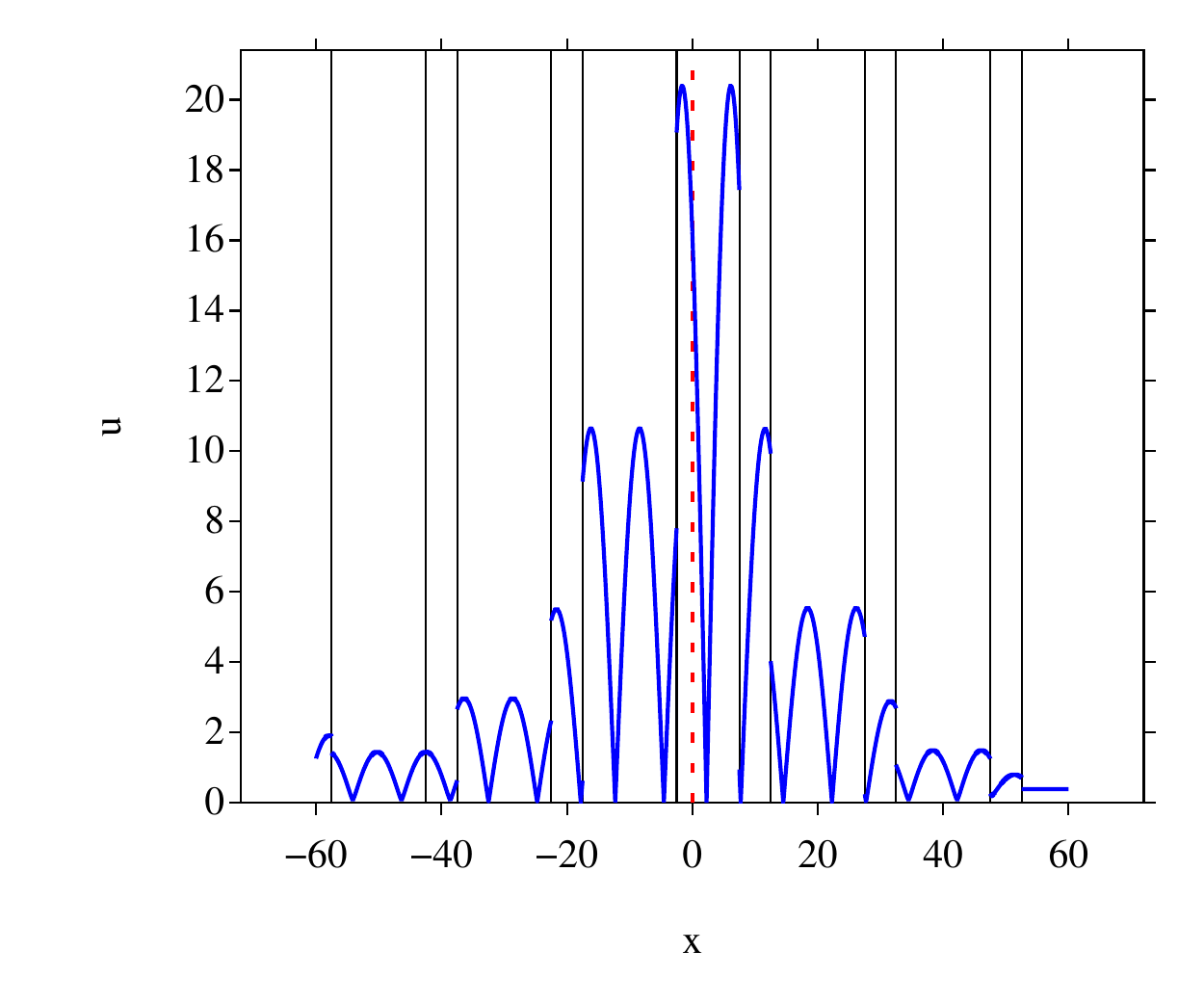} 
			\end{tabular}
		\end{center}
		\vspace{-0.6cm}
		\caption{\label{FigModeU}Spatial evolution of $u$ at the scaled frequencies $\tilde{f}_1^\sharp=0.420$ (a) and $\tilde{f}_3^\sharp=1.289$. The vertical solid lines represent the interfaces with imperfect contacts. The red vertical dotted line at $x=0$ denotes the interface between PC-L and PC-R, each being built with $N=3$ cells.} 
	\end{figure}
	
	Finally, we study the evolution of the interface modes when the geometry of the elementary cell varies. The length $h$ remains constant, but the parameter $\theta$ varies; the PC-R is built using the parameter $1-\theta$. Figure \ref{FigZTheta} shows $\tilde{f}_1^\sharp$ (a) and $\tilde{f}_3^\sharp$ (b) as functions of $\theta$. In (b), the minimum value of $\theta$ is 0.1; below this value, a Dirac point exists, as observed in the gap $\Omega_3$ on Figure \ref{FigBloch}-(b). The blue lines denote the lower and upper edges of the gap $\Omega_1$ (a) and $\Omega_3$ (b). The red line denote the scaled frequency of the interface mode, computed as the zero of $Z_L+Z_R$. At the scale of the Figure, this frequency seems contant when $\theta$ varies, which shows the robustness of the topologically protected interface modes. Nevertheless, a large zoom on the red line would show that $\tilde{f}_3^\sharp$ is not rigourously constant.
	
	\begin{figure}[h!]
		\begin{center}
			\begin{tabular}{cc}
				(a) & (b)\\
				\hspace{-1cm}
				\includegraphics[width=0.55\textwidth]{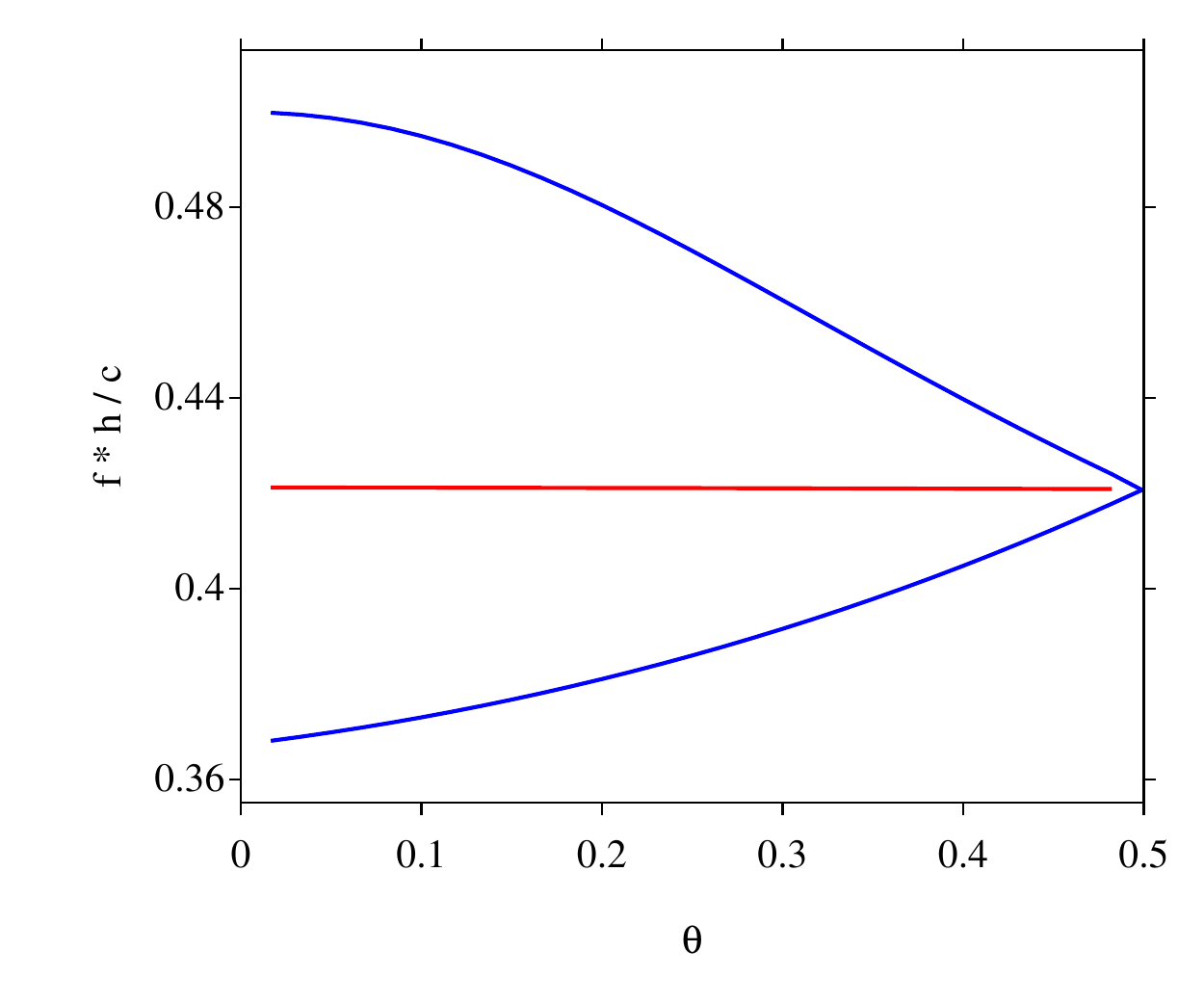} &
				\hspace{-1cm}
				\includegraphics[width=0.55\textwidth]{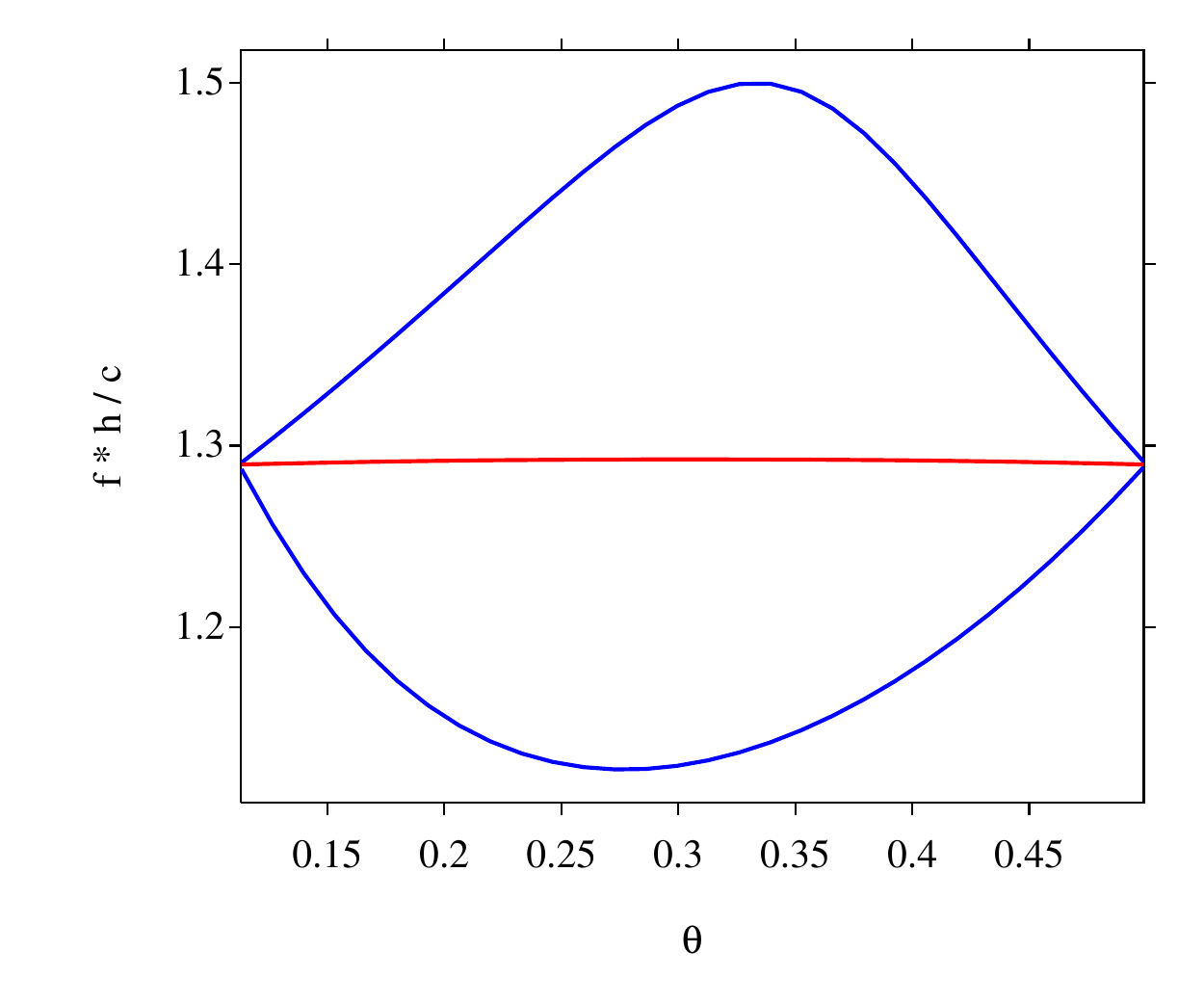} 
			\end{tabular}
		\end{center}
		\vspace{-0.6cm}
		\caption{\label{FigZTheta}Frequency evolution of the topologically protected interface modes in the gaps $\Omega_1$ (a) and $\Omega_3$ (b), as a function of the parameter $\theta$ governing the geometry of the elementary cell. These scaled frequencies $\tilde{f}_1^\sharp$ (a) and $\tilde{f}_3^\sharp$ (b) are represented by a red line. The blue curves represent the scaled lower edge $\tilde{f}_n^+$ and scaled upper upper $\tilde{f}_{n+1}^-$ of the gaps $\Omega_n$.} 
	\end{figure}
	
	
	\section{Conclusion}\label{SecConclu}
	
	In this work, we considered semi-infinite PCs with mirror symmetry, which are joined together. To prove the existence of topologically protected interface modes in their common gaps, we used the concept of surface impedance. Initially introduced in \cite{Xiao14} in the case of a bilayer medium, we extended this approach to any mirror symmetric PC in 1D. The main result obtained in Theorem \ref{TheoZ} states that a change of Zak phase, or equivalently a symmetry inversion of Bloch modes at gap edges, guarantees the existence and unicity of topologically protected interface localized modes. 
	
	This work opens several research directions. The first natural extension would be the generalization of this approach to higher spatial dimensions. Zak phases have been used as topological indices in 2D systems characterizing the presence of edge waves~\cite{Liu17,Liu17b}, but a relevant bulk-boundary correspondence has not been established so far~\cite{Xu23}. Similarly, the concept of higher-order topological insulators have attracted a lot of attention in the recent years~\cite{Schindler18,Benalcazar17}. For instance, 2D higher order topological insulators can host localized modes in their corners. Again, a complete understanding of the a higher-order bulk-boundary correspondence is still lacking, but the similarity with the present problem suggests that similar techniques might lead to progresses in that direction. Indeed, the study of 2D photonic surface modes from surface impedances was carried out for instance in \cite{Lawrence10}, but so far without the prism of topology. 	
	
	A simpler generalization of the present work would be to extend the proof to quasi-1D regimes. This work and previous ones~\cite{Fefferman14,Xiao14,Gontier20,Drouot21,Lin22} heavily rely on having only two modes at a given frequency (equation~\eqref{Helmholtz} is second order in space). It is an open question how to extend this to multimodal systems, such as finite width waveguide~\cite{Pagneux96}, or dispersive media containing higher-order spatial derivatives, such as flexural beams \cite{Carta15}. 
	
	Another interesting research direction would be the investigation of topological modes in nonlinear regimes. In \cite{Chaunsali19}, the emergence of a topological mode in a discrete system alternating different nonlinear springs is shown, depending on the amplitude of the perturbations. The study of nonlinear topological modes in a continuous medium remains a  open subject. Let us note that the imperfect conditions \eqref{JCspringmass} easily allow the introduction of nonlinear mechanisms \cite{Bellis21}.

	\appendix
	
	\section{Zak phase as a topological invariant}\label{Sec-ZakTopo}

	In this appendix, we show how to relate the Zak phase with the symmetries of the Bloch modes on the band edges. To ease the discussion, we start by recalling the definition of the Berry connection in equation~\eqref{Berry}: 
	\begin{equation}
		\label{BerryApp}
		A_n(q)=-\mathrm{i} \mean{u_n(q)|\partial_q u_n(q)}.  
	\end{equation}
	The Berry connection in~\eqref{BerryApp} is defined up to a gauge transformation. Indeed, one can always redefine the Bloch modes by changing its phase, as 
	\begin{equation}
		\tilde u_n(q) = e^{\mathrm{i} \theta(q)} u_n(q). 
	\end{equation}
	Assuming that $\theta$ is a smooth function of $q$, this gives the new connection
	\begin{equation}
		\label{Gauge}
		\tilde A_n(q) = A_n(q) + \frac{d \theta}{d q} . 
	\end{equation}
	Remembering that the Zak phase is obtained as the integral of the Berry connection over the Brillouin zone, as defined in equation~\eqref{Zak}, the latter can be computed with either connection. The gauge transformation~\eqref{Gauge} shows that: 
	\begin{equation}
		\int_{-\pi}^{\pi} \tilde A_n(q) d q = \int_{-\pi}^{\pi} A_n(q) d q + \theta(\pi) - \theta(0). 
	\end{equation}
	Since $e^{\mathrm{i} \theta(q)}$ must be $2\pi$-periodic, $\theta(\pi) - \theta(0)$ is a multiple of $2\pi$. This shows that the Zak phase modulo $2\pi$ is a gauge invariant quantity (i.e. independent of a particular choice of Bloch mode basis). At this level, it is worth noting that the Berry connection is defined here using Bloch modes satisfying the Bloch condition $u_q(x+h) = e^{\mathrm{i} q} u_q(x)$ (see equation~\eqref{Floquet_Eig}). The advantage is that $u_q$ is periodic in $q$, and hence, the proof of gauge invariance follows naturally, as we just saw. An alternative definition is sometimes found in the literature~\cite{Cayssol21}, which uses the periodic part of $u_q$: 
	\begin{equation}
		\label{PeriodicBloch_App}
		\psi_q(x) = e^{-\mathrm{i} qx/h} u_q(x) .
	\end{equation}
	While  $\psi_q(x)$ is periodic in $x$, it is no longer in $q$, and hence, the gauge invariance of the Zak phase defined in this manner is quite cumbersome~\footnote{In fact, the original paper by Zak uses the latter choice, and hence contains a lengthy discussion about gauge invariance.}. 
	
	Moreover, when the system is mirror symmetric, the connection at $q$ can be related to that at $-q$. Indeed, by taking the derivative of equation~\eqref{ThetaQ} with respect to $q$, and the scalar product with $u_n(-q)$ we obtain: 
	\begin{equation} 
		A_n(-q) = A_n(q) + \partial_q \xi_n . 
	\end{equation}  
	In other words, the Berry connection at $q$ differs from that at $-q$ by a total derivative. Now, if we integrate this relation over half of the Brillouin zone, we obtain 
	\begin{equation} 
		-\int_{-\pi}^0 A_n(q) d q = \int_{0}^{\pi} A_n(q) d q + \xi_n(\pi) - \xi_n(0). 
	\end{equation} 
	Combining the two integrals gives the integral over the whole Brillouin zone, and hence, the identity~\eqref{PhiZak} for the Zak phase. We can also notice that for mirror symmetric systems, the Zak phase defined using the spatially periodic modes $\psi_q(x)$ differs by $\pi$. More precisely, by inspecting equation~\eqref{PeriodicBloch_App} and using the expression~\eqref{PhiZak}, we see that the Zak phase computed with $\psi_q$ is 0 when $\Phi_n$ is $\pi$, and \emph{vice-versa}. Importantly, while the value of the Zak phase differ, the change of topological phase between two system is identical whether one uses $u_q$ of $\psi_q$.

	
	\section{Transfer matrix}\label{Sec-TM}
	
	Alternative proofs can be obtained through the usual transfer matrix. Here we list some useful results. Setting
	\begin{equation}
		\mathbf{U}(x,\omega)=\left(\begin{array}{c}
			u\\
			\ds E\,u^{'}
		\end{array}
		\right), \qquad
		\mathbf{A}(x,\omega)=\left(
		\begin{array}{cc}
			0 & 1/E(x) \\
			-\rho(x)\,\omega^2 & 0
		\end{array}
		\right),
	\end{equation}
	then the Helmholtz equation \eqref{Helmholtz} writes as a differential equation on $[0,h]$
	\begin{equation}
		\frac{d}{dx} \mathbf{U}=\mathbf{A}\,\mathbf{U},\qquad x \notin{\mathcal I},
		\label{ODE-Helmholtz}
	\end{equation} 
	with jump conditions \eqref{JCspringmass} at points of ${\cal I}$. Integration of \eqref{ODE-Helmholtz} on the subintervals $]x_j,x_{j+1}[$ and use of the jump conditions \eqref{JCspringmass} leads to
	\begin{equation}
		\mathbf{U}(h,\omega)=\mathbf{M}\,\mathbf{U}(0,\omega),
		\label{MatMh0}
	\end{equation}
	where $\mathbf{M}$ is the transfer matrix in ${\cal M}_2(\mathbb{R})$. Reciprocity and conservation of energy yields the general form
	\begin{equation}
		\mathbf{M}(\omega)=\left(
		\begin{array}{cc}
			\ds \alpha(\omega) & \beta(\omega) \\ [5pt]
			\ds \tilde{\beta}(\omega) & \alpha(\omega)
		\end{array}
		\right),\hspace{1cm} \det(\mathbf{M})=1.
		\label{PropMatM}
	\end{equation} 
	Bloch-Floquet theorem implies that $\mathbf{U}(h,\omega)=\lambda\,\mathbf{U}(0,\omega)$ with $\lambda=e^{\mathrm{i}q}$, where $q\in[0,\pm \pi]$ is the Bloch wavenumber. Comparison with \eqref{MatMh0} gives that $\lambda\equiv\lambda_{0,1}$ are the eigenvalues of $\mathbf{M}$. One also obtains the relations
	\begin{equation}
		\lambda_0+\lambda_1=2\alpha \in\mathbb{R},\hspace{1cm} \lambda_0\, \lambda_1=\alpha^2-\beta\tilde{\beta}=1.
		\label{LambdaEig}
	\end{equation}
	Depending on $\omega$, two cases occur. In the first case $|\lambda_{0,1}|=1$, the eigenvalues are complex conjugate, hence $q_{0,1}$ is purely real: it corresponds to a band. In the second case $|\lambda_{0,1}|\neq 1$, the eigenvalues are real and have the same sign, hence $q_{0,1}$ is purely imaginary: it corresponds to a gap. Some useful properties on the coefficients of $\mathbf{M}$ are now stated.
	
	\begin{Property}[Coefficient $\alpha$]
		The following properties hold:
		\begin{itemize}
			\item in bands: $|\alpha(\omega)|<1$;
			\item in gaps: $|\alpha(\omega)|>1$;
			\item on the edges of gaps: $|\alpha(\omega)|=1$.
		\end{itemize}
		\label{PropAlpha}
	\end{Property}
	
	\begin{proof}
		In a band, $\lambda_{0,1}$ are complex conjugates. From \eqref{LambdaEig}, it follows that $\alpha=\Re\mbox{e}(\lambda_0)\in]0,1[$, and hence $|\alpha|<1$. In a gap at $q=0$, then $\lambda_0=y\in]0,1[$ and $\lambda_1=1/y$. It follows $\alpha(y)=(y+1/y)/2$ and $\alpha^{'}(y)=(1-1/y^2)/2<0$. Since $\alpha(1)=1$, then $\alpha(y)>1$ for all $y<1$. The same argument holds in a gap at $q=\pi$. Lastly, at the band edges one has $\lambda_0=\lambda_1=\pm 1$, which concludes the proof.
	\end{proof}
	
	\begin{Property}[Coefficients $\beta$ and $\tilde{\beta}$]
		The following properties hold:
		\begin{itemize}
			\item in bands, $\beta(\omega)$ and $\tilde{\beta}(\omega)$ have opposite signs;
			\item in gaps, $\beta(\omega)$ and $\tilde{\beta}(\omega)$ have the same sign;
			\item $\beta(\omega)$ or $\tilde{\beta}(\omega)$ vanishes at edges of gaps.
		\end{itemize}
		\label{PropBeta}
	\end{Property}
	
	\begin{proof}
		From \eqref{LambdaEig}, it follows $\beta\,\tilde{\beta}=\alpha^2-1$. The different cases in Property \ref{PropAlpha} allow to conclude.
	\end{proof}
	
	\noindent
	From \eqref{PropMatM}, the components of the eigenvector $\mathbf{U}(0)$ of $\mathbf{M}$ satisfy
	$$
	\left\{
	\begin{array}{l}
		\beta\,E\,u^{'}(0,\omega)=(\lambda-\alpha)\,u(0,\omega),\\ [6pt]
		\tilde{\beta}u(0,\omega)=(\lambda-\alpha)\,E\,u^{'}(0,\omega),
	\end{array}
	\right.
	$$
	and hence
	$$
	Z(\omega)=\frac{\beta(\omega)}{\lambda(\omega)-\alpha(\omega)}.
	$$
	In gaps, Properties \ref{PropAlpha} and \ref{PropBeta} give $\alpha(\omega) \neq \lambda(\omega)$ and $\beta(\omega)\neq 0$. It implies that the surface impedance is real, finite, and never vanishes.

	
	%
	%


	\end{document}